\title{Local Correction with Constant Error Rate}
\author{
Noga Alon\thanks{Sackler School of Mathematics
and Blavatnik School of Computer Science,
Tel Aviv University,
Tel Aviv 69978, Israel.
Email: {\tt nogaa@tau.ac.il}.
Research supported in part by an ERC Advanced
grant, by a USA-Israeli BSF grant and by
the Israeli I-Core program.}
\and
Amit Weinstein\thanks{
Blavatnik School of Computer Science, Tel Aviv University, Tel Aviv
69978, Israel. Email: {\tt amitw@tau.ac.il}. Research supported in part by
an ERC Advanced grant and by the Israeli Centers of Research
Excellence (I-CORE) program.} }
\newtheorem{theo}{Theorem}[section]
\newtheorem{prop}[theo]{Proposition}
\newtheorem{lemma}[theo]{Lemma}
\newtheorem*{question*}{Question}
\newtheorem*{remark}{Remark}
\newtheorem{definition}[theo]{Definition}
\newtheorem*{definition*}{Definition}
\newcommand{\Exp}[1]{\mathbf{E}\left[#1\right]}
\newcommand{\Expx}[2]{\mathbf{E}_{#1}\left[#2\right]}
\newcommand{\ZZ}{\mathbb{Z}}
\newcommand{\Inf}{\mathrm{Inf}}
\newcommand{\SymInf}{\mathrm{SymInf}}
\newcommand{\core}{\mathrm{core}}
\newcommand{\calI}{\mathcal{I}}
\newcommand{\calJ}{\mathcal{J}}
\newcommand{\calS}{\mathcal{S}}
\newcommand{\calDI}{\mathcal{D_I}}
\newcommand{\calDIz}{\mathcal{D}_{\mathcal{I}_0}}
\newcommand{\calDIo}{\mathcal{D}_{\mathcal{I}_1}}
\newcommand{\calDIzo}{\mathcal{D}_{\mathcal{I}_0 \cup \mathcal{I}_1}}
\newcommand{\calDIW}{\mathcal{D}_{\mathcal{I}}^{W}}
\def \eps {\varepsilon}
\begin{document}
\maketitle

\begin{abstract}
A Boolean function $f$ of $n$ variables is said to be $q$-locally
correctable if, given a black-box access to a function $g$ which is
"close" to an isomorphism $f_{\sigma}(x)=f_{\sigma}(x_1, \ldots, x_n) = f(x_{\sigma(1)}, \ldots, x_{\sigma(n)})$
of $f$, we can compute
$f_{\sigma}(x)$ for \emph{any} $x \in \ZZ_2^n$ with good probability
using $q$ queries to $g$. It is known that degree $d$
polynomials are $O(2^d)$-locally correctable, and that most
$k$-juntas are $O(k \log k)$-locally correctable, where the closeness
parameter, or more precisely the distance between $g$ and $f_{\sigma}$,
is required to be exponentially small (in $d$ and $k$ respectively).

In this work we relax the requirement for the closeness parameter
by allowing the distance between the functions to be a constant.
We first investigate the family of juntas, and show that almost every $k$-junta
is $O(k \log^2 k)$-locally correctable for any distance $\eps < 0.001$.
A similar result is shown for the family of partially symmetric functions,
that is functions which are indifferent to any reordering of all but a constant
number of their variables. For both families, the algorithms provided here
use non-adaptive queries and are applicable to most but not all functions
of each family (as it is shown to be impossible to locally correct all of them).

Our approach utilizes the measure of symmetric influence
introduced in the recent analysis of testing partial symmetry of functions.
\end{abstract}

\section{Introduction}

Local correction of functions deals with the task of determining the value of a function in a
given point by reading its values in several other points. More precisely,
we care about locally correcting specific functions which are known up
to isomorphism, that is, functions which are known up to reordering of the input variables.
Our main interest is identifying the number of needed queries for this task, for a given function.
For a permutation $\sigma \in \calS_n$ and a function $f = f(x_1, \ldots, x_n): \ZZ_2^n \to \ZZ_2$,
let $f_{\sigma}$ denote the function given by $f_{\sigma}(x_1, \ldots, x_n) = f(x_{\sigma(1)}, \ldots, x_{\sigma(n)})$.

\begin{question*} Given a specific Boolean function $f$, what is the needed
query complexity in order to correct an input function which is
\emph{close} to some isomorphism $f_{\sigma}$ of $f$?
\end{question*}

This question can be seen as a special case of locally correctable
codes (see, e.g.,~\cite{Y}). Each codeword would be the $2^n$ evaluations of an
isomorphic copy of the input function, and thus the number of distinct codewords is at most $n!$,
and we would like to correct any specific value of the given noisy codeword using as
few queries as possible.

The notion of closeness in the above question plays a crucial role in answering it.
We say that two functions are $\eps$-close to one another if they differ
on at most an $\eps$ fraction of the inputs. Equivalently, $f$ is $\eps$-close to $f'$ if
$\Pr_x [ f(x) \neq f'(x) ] \leq \eps$, over a uniformly chosen $x \in \ZZ_2^n$.
The main focus of this work is to better understand the functions for which a constant
number of queries suffices for local correction, while we allow $\eps$ to be a constant as well.
In particular, we show that for partially symmetric functions, that is, functions which are symmetric
with respect to all but a constant number of their variables, this is typically the case.

The field of local correction of Boolean functions is closely related to that of
property testing, and in particular to testing isomorphism of functions.
In testing, the goal is to distinguish between a function which satisfies
some property and functions which are far from satisfying that property,
while here we are guaranteed the input function is close to satisfy a property, the property of being isomorphic
to some specific function, and we are required to locally correct a given input.
Due to this resemblece, many tools used in the research of local correction are borrowed
from the field of testing functions isomorphism and property testing in general
(see e.g.~\cite{CG, FKRSS, AKKLR, BO, AB, CGM}).

\subsection{Preliminaries}

Below is the formal definition of locally correctable functions, as given in~\cite{AW}.

\begin{definition*}
A Boolean function $f: \ZZ_2^n \to \ZZ_2$ is $q$-locally
correctable for $\eps > 0$ if the following holds. There exists an
algorithm that given an input function $g$ which is $\eps$-close to
an isomorphism $f_{\sigma}$ of $f$, can determine the value
$f_{\sigma}(x)$ for \emph{any} specific $x \in \ZZ_2^n$ with
probability at least $2/3$, using $q$ queries to $g$.
\end{definition*}

The two interesting parameters of the above definition are the noise rate $\eps$ and 
the number of queries $q$. In our recent work~\cite{AW} it is shown that when we
allow the noise rate to be relatively small, depending on the structure of the function,
functions from several interesting classes can be locally corrected. More precisely, it is observed that
every degree $d$ polynomial can be locally corrected from noise rate of $\eps < 2^{-d-3}$
using $O(2^d)$ queries. The family of $k$-juntas, that is, functions which only depend on $k$
of their input variables, are in particular degree $k$ polynomials and hence the same upper bound
is applicable. However, this is usually not tight as most $k$-juntas can actually be locally corrected
from the same noise rate using $O(k\log k)$ queries.
For more details on the above results, we refer the reader to~\cite{AW}.

The above results require an exponential dependency between the noise rate
$\eps$ and the parameter determining the structure of the function, such as the size of the junta or the degree of the polynomial.
This raises the following natural question.
Can we locally correct functions from these families when the noise rate is higher
or even constant? More generally, which functions can or cannot be locally corrected with a constant
number of queries from a given constant noise rate?

\subsection{Our results}

The main result of this work is the identification of two families, in which most functions can be locally
corrected from a constant noise rate. The first family is that of juntas.
We say that almost every $k$-junta satisfies a property if only $\eps_k$ fraction of the $k$-juntas
do not satisfy it, where $\eps_k$ tends to zero as $k$ tends to infinity.

\begin{theo}\label{thm:juntas}
Almost every $k$-junta can be locally corrected from a noise rate of
$\eps = 0.001$, using $O(k \log^2 k)$ non-adaptive queries.
\end{theo}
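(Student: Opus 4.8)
The plan is to fix a black-box function $g$ which is $\eps$-close to some isomorphism $f_\sigma$ of a "typical" $k$-junta $f$, fix a target point $x \in \ZZ_2^n$, and recover $f_\sigma(x)$ by a two-stage procedure. First I would \emph{identify} (approximately) which $k$ coordinates are relevant for $f_\sigma$, i.e. the set $\sigma^{-1}(J)$ where $J$ is the junta of $f$; then, using that knowledge, I would estimate $f_\sigma(x)$ by sampling inputs that agree with $x$ on the relevant coordinates and are random elsewhere, and taking a majority vote. For the identification step, the natural tool is an influence test: a coordinate $i$ is relevant for $f_\sigma$ iff flipping $x_i$ changes $f_\sigma$ on a nonnegligible fraction of inputs. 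Since $g$ is only $\eps$-close to $f_\sigma$, the empirical influence measured through $g$ is a noisy version of the true influence, so I would need the influences of the $k$ relevant variables of a typical junta to be bounded \emph{away} from zero (and from the noise floor $O(\eps)$) — this is where "almost every $k$-junta" enters, via a counting/probabilistic argument showing that a random $k$-junta has all $k$ influences $\geq c$ for an absolute constant $c$ with probability $1 - o_k(1)$. Detecting each of the $k$ relevant coordinates with confidence $1 - 1/\mathrm{poly}(k)$ by a union bound over the (up to $n$) coordinates costs $O(\log k)$ queries per coordinate after the influences are separated from noise; this is essentially the source of one $\log k$ factor, and the symmetric-influence machinery cited in the abstract is what lets one do this robustly and non-adaptively.

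The second $\log k$ factor comes from the correction step proper. Once the relevant set $R = \sigma^{-1}(J)$ (of size $k$) is known, $f_\sigma(x)$ depends only on $x|_R$; I would query $g$ at points $y$ with $y|_R = x|_R$ and $y$ uniform off $R$. Each such query equals $f_\sigma(x)$ unless it falls in the $\eps$-disagreement region, but conditioning on $y|_R = x|_R$ the disagreement probability is $\leq \eps \cdot 2^k$ in the worst case over the restriction — so naively this fails. The fix is to first establish, again for almost every $k$-junta and by averaging, that for \emph{most} restrictions $x|_R$ the conditional error of $g$ is still $O(\eps)$, handle the remaining few "bad" restrictions separately, and in all cases amplify a $(1/2-\Omega(1))$-biased coin with $O(\log k)$ independent samples to drive the failure probability below $1/(3k)$, so that the whole algorithm (identification + correction) succeeds with probability $\geq 2/3$ by a union bound. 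Total query count: $O(k \log k)$ for identification plus $O(\log k)$ — or, if one tests relevance of all candidate coordinates and handles bad restrictions carefully, $O(k\log k)$ more — for correction, giving $O(k\log^2 k)$; the second logarithm is the price of making the per-coordinate and per-restriction error probabilities small enough to survive the union bound over $\Theta(k)$ events. Non-adaptivity is immediate since both the influence queries and the correction queries can be fixed in advance given $x$.

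The main obstacle I expect is the \textbf{robust identification of the relevant coordinates under constant noise}: with $\eps$ a fixed constant like $0.001$, a single influence estimate through $g$ has additive error on the order of $\eps$, so the argument genuinely needs that a typical $k$-junta's relevant variables have influence bounded below by a constant strictly exceeding the noise-induced bias — and, symmetrically, that the $n-k$ irrelevant variables have influence exactly $0$, hence are not confused with relevant ones even after noise, which forces the confidence per coordinate up to $1 - 1/\mathrm{poly}(\max(n,k))$ and hence $\Theta(\log k)$ (or $\Theta(\log n)$, absorbed into the same class of functions by first collapsing to $\mathrm{poly}(k)$ candidate coordinates) queries each. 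Quantifying "almost every" — i.e. showing the fraction of $k$-juntas with some relevant variable of influence $o(1)$ tends to $0$ — is a clean but essential lemma: a uniformly random Boolean function on $k$ bits has every coordinate influence concentrated around $1/2$, and the deviation needed ($\text{influence} < c$) has probability exponentially small in $2^k$, easily beating the $2^{2^k}$ count of $k$-juntas. Everything else (Chernoff bounds for the majority votes, the union bound, the handling of a bounded number of atypical restrictions) is routine.
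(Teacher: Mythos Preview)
Your correction step has a genuine gap that cannot be patched the way you suggest. You correctly observe that, conditioned on $y|_R = x|_R$, the probability that $g(y) \neq f_\sigma(y)$ can be as large as $\eps \cdot 2^k$; for $\eps = 0.001$ and $k \geq 10$ this already exceeds $1$, meaning the adversary can concentrate \emph{all} the noise in the slice $\{y : y|_R = x|_R\}$ for a single fixed $x$. Your proposed fix, that ``for most restrictions $x|_R$ the conditional error is still $O(\eps)$, handle the remaining few bad restrictions separately,'' is not a fix: local correction must succeed with probability $\geq 2/3$ for \emph{every} input $x$, including the adversarially chosen one whose slice carries all the noise, and you give no mechanism for that case. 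Majority voting over the slice simply cannot work here.

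The paper avoids this entirely by never taking a majority vote for the final answer. The key observation you are missing is that the algorithm \emph{knows} $f_{\core}$; so once it knows (a) which $k$ parts of a random partition contain the junta variables and (b) the bijection between those parts and the $k$ inputs of $f_{\core}$, it can output $f_{\core}(x|_{\text{junta}})$ directly with no further queries and no conditional-error issue. To make (a) yield $x|_{\text{junta}}$, the partition is chosen to refine $\{i : x_i = 0\}$ and $\{i : x_i = 1\}$, so knowing the part of a junta variable already tells you its value in $x$. Step (b) --- recovering the ordering --- is the piece you omit entirely: the paper samples $O(k\log k)$ points constant on parts, and among all $k!$ candidate orderings picks the one maximizing agreement with $f_{\core}$. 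This works only because a typical $k$-junta core is $0.1$-far from every nontrivial permutation of itself (the paper's Proposition~2.3), a second ``almost every'' property you never invoke. Finally, your identification step as written (test each of $n$ coordinates) would depend on $n$; the paper uses a random partition into $O(k^2)$ parts and tests parts, which is the implicit-learning idea you only vaguely gesture at.
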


Similarly to the result in~\cite{AW}, this statement applies to almost every junta (and not to all of them).
The main differences between the results
are the constant noise rate which does not depend on the junta size $k$,
and the fact that the algorithm we describe is non-adaptive (at the expense of
increasing the query complexity by a logarithmic factor).

The second main result presented here is an extension of Theorem~\ref{thm:juntas} to another family of functions,
the family of partially symmetric functions, as defined in~\cite{BWY}. A function $f$ is called $t$-symmetric
if there exists a set of $t$ variables such that $f$ is symmetric with respect to these variables (that is,
any reordering of these variables does not change the function).
To better see the connection between these functions and juntas,
the following equivalent definition is often useful.
We say that $f$ is $(n-k)$-symmetric if there exists a set
of $k$ variables such that the output of $f$ is determined by these $k$ variables,
and the Hamming weight of the others. A $k$-junta is in particular an $(n-k)$-symmetric function,
and hence the following theorem can be viewed as a generalization of Theorem~\ref{thm:juntas}.

\begin{theo}\label{thm:psfs}
Almost every $(n-k)$-symmetric function can be locally corrected
from a noise rate of $\eps = 0.001$ using $O(k \log^2 k)$
non-adaptive queries.
\end{theo}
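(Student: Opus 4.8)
The plan is to run the scheme used to prove Theorem~\ref{thm:juntas}, with one structural change: in a partially symmetric function the symmetric variables can have large ordinary influence (as in $\Maj$), so ordinary influence no longer isolates the ``special'' variables; we use symmetric influence for this, which is exactly the tool suggested in the abstract. Fix an $(n-k)$-symmetric $f$ with a set $R$ of $k$ special variables and symmetric part $T=[n]\setminus R$, so that $f_\sigma(x)$ is a fixed function of $x|_{\sigma(R)}$ and $\mathrm{wt}(x|_{\sigma(T)})$. The observation that makes the symmetric part essentially free is that $\mathrm{wt}(x|_{\sigma(T)}) = \mathrm{wt}(x) - \mathrm{wt}(x|_{\sigma(R)})$, so $f_\sigma(x)$ is in fact determined by $x|_{\sigma(R)}$ together with the \emph{known} quantity $\mathrm{wt}(x)$; equivalently, by the set $S := \{r\in R:\ x_{\sigma(r)}=1\}\subseteq R$ and $\mathrm{wt}(x)$. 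Hence, once the algorithm has recovered the set $\sigma(R)\subseteq[n]$ and the bijection $\sigma|_R\colon R\to\sigma(R)$, it outputs $f$ evaluated at the corresponding profile and makes no further queries, since $f$ itself is known. All queries are therefore spent recovering $\sigma(R)$ and the bijection, precisely as in the junta case.

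\textbf{Step 1: genericity of $f$.} I would first pin down, and verify for all but an $\eps_k\to 0$ fraction of $(n-k)$-symmetric functions, the properties of $f$ that the algorithm relies on. With $c_0$ a suitable absolute constant much larger than $\eps=0.001$ (say $c_0=0.1$): (a) every subset $S$ with $S\neq T$ and $|S|\geq n-k$ has $\SymInf_f(S)\geq c_0$; and (b) $f$ has no nontrivial symmetry among its special variables, and moreover the special variables carry robust, pairwise-distinguishable ``local signatures'' --- e.g.\ for each $r\in R$ the behavior of $f$ under flipping slot $r$, read off at many inputs, pins down $r$ among all of $R$. Property (a) holds because any such $S$ contains a special variable, a uniformly random permutation of $S$ moves it into a symmetric slot with constant probability, and then the value of $f$ flips with probability close to $1/2$; property (b) is the analogue of the genericity input used in Theorem~\ref{thm:juntas}. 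Both are first/second-moment statements about a uniformly random function on the $2^k(n-k+1)$ ``profiles'': each of the finitely many relevant statistics concentrates around a universal constant with deviation probability exponentially small in $2^k(n-k+1)$, which dominates the $n^{O(k)}$-size union bound over candidate sets and permutations, uniformly in $n\geq k$.

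\textbf{Step 2: the algorithm.} Given $g$ that is $\eps$-close to $f_\sigma$: since $f_\sigma$ is \emph{exactly} symmetric on $\sigma(T)$ and permutations preserve the uniform measure, $\SymInf_g(\sigma(T))\leq 2\eps$, while property (a) gives $\SymInf_g(S)\geq c_0-2\eps$ for every other $S$ of size $\geq n-k$. Running the random-partition/group-testing procedure of Theorem~\ref{thm:juntas} with the primitive ``flip a random block and test whether $g$ changed'' replaced by ``apply a random permutation to a random block and test whether $g$ changed'' --- which estimates $\SymInf_g$ of that block --- recovers $\sigma(R)$ (and hence $\sigma(T)$) using $O(k\log^2 k)$ non-adaptive queries, correctly with probability at least $0.9$. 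Next, using $O(k\log^2 k)$ further non-adaptive queries (again as in Theorem~\ref{thm:juntas}: query $g$ at a batch of marginally uniform points and at their perturbations in the coordinates of $\sigma(R)$, read off the corresponding signatures, and output the unique relabeling consistent with $f$), recover the bijection $\sigma|_R$ with probability at least $0.9$. Finally evaluate $f$ at the profile $(x|_{\sigma(R)},\mathrm{wt}(x))$. A union bound over the two failure events and the $\eps_k$-fraction of non-generic $f$ gives overall success probability at least $2/3$.

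\textbf{Main obstacle.} The crux is Step 2's use of symmetric influence as a robust, query-efficient sieve for the special variables. Robustness of $\SymInf_g$ under an $\eps$-perturbation is immediate; the delicate points are (i) showing, for a generic $f$, that no ``wrong'' subset of size $n-k$ even approximately passes the symmetry test --- and doing so with per-subset failure probability below $n^{-\Theta(k)}$, so as to survive the $\binom{n}{k}$-fold union bound --- and (ii) adapting the group-testing analysis of Theorem~\ref{thm:juntas} to the set function $\SymInf_g$, which, unlike ordinary influence, is sensitive to \emph{which} symmetric coordinates are omitted from a block, while keeping the query count at $O(k\log^2 k)$ and the queries non-adaptive. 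Establishing (i) and (ii), together with the (routine) genericity computations of Step 1, is where the bulk of the work lies; everything downstream of recovering $\sigma(R)$ is a direct transcription of the junta argument.
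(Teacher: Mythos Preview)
Your approach is essentially the paper's: replace influence by symmetric influence, run the same partition/group-testing to isolate the $k$ asymmetric parts, then identify the correct permutation among the $k!$ candidates by comparing $g$ against $f_{\core}$ on a batch of samples. A few points where your write-up diverges from what is actually needed:

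\textbf{Your ``main obstacles'' are not the obstacles.} Symmetric influence is monotone in the set argument (this is proved in \cite{BWY} and restated in the paper), so your worry (ii) evaporates: if a block $J$ contains one asymmetric and one symmetric variable, $\SymInf_f(J)\geq\SymInf_f(\{i,j\})$ automatically. For the same reason, your genericity property (a) need not be established for all $\binom{n}{k}$ sets of size $\geq n-k$; by monotonicity it reduces to the $k$ pair statements $\SymInf_f(\{i,j\})\geq c_0$ for $i$ asymmetric and $j$ symmetric (one $j$ suffices by symmetry), so the union bound in (i) is over $k$ events, not $n^{\Theta(k)}$. The paper states exactly this pair property as Proposition~\ref{prop:psfs-high-syminf}.

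\textbf{What you do gloss over.} First, the group-testing does not recover $\sigma(R)\subseteq[n]$; it recovers $k$ \emph{parts} of the partition, each containing one asymmetric variable. This is harmless only because the partition is refined by $x$ (you split $X_0$ and $X_1$ separately), so the value of $x$ on each asymmetric variable is known from its part---worth saying explicitly. Second, and more substantively, the permutation-identification step cannot literally reuse the junta sampler $\calDI$: those samples are all balanced, so you would only ever see $f_{\core}$ at a handful of Hamming weights of the symmetric part, whereas your genericity property (b) (the paper's Proposition~\ref{prop:psfs-no-iso}) is stated over the binomial weight distribution. The paper handles this with a dedicated ``workspace'' part $W$ and a modified distribution $\calDIW$ that makes the marginal on $(y_{B},|y_{\overline B}|)$ close to uniform $\times$ binomial while keeping $y$ nearly uniform on $\ZZ_2^n$. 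You should either import this device or strengthen (b) to a single-layer statement.

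Finally, a small bookkeeping point: the ``almost every'' in the theorem is $1-\eps_n$ with $\eps_n\to 0$ as $n\to\infty$ (the paper's bounds are $2^{-\Omega(\sqrt n)}$), not $1-\eps_k$ with $k\to\infty$.
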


Here too the term "almost every" means that only $\eps_n$ fraction of these functions
do not satisfy the above where $\eps_n$ tends to zero as $n$ tends to infinity, and it
does not depend on $k$.

The proof of the theorem relies on the analysis of partially symmetric functions and borrows some
of the ideas provided in~\cite{BWY}. Notice that although juntas are in particular partially symmetric
functions, Theorem~\ref{thm:juntas} is not a corollary of Theorem~\ref{thm:psfs}, as juntas represent a
small fraction of all partially symmetric functions, and these results are applicable only to most functions in
the respective families.

The above theorems apply to almost every junta and partially symmetric function but not to all of them.
As the next simple result indicates, this restriction is unavoidable. Some functions in these families
are not locally correctable from a constant noise rate, regardless of the number of queries.

\begin{prop}\label{prop:impossible-juntas-psfs}
For every constant $\eps > 0$, there exists $k(\eps)$ such that the following holds.
For every $k \geq k(\eps)$, there exist $k$-juntas which cannot be locally corrected from $\eps$ noise rate,
regardless of the number of queries.
\end{prop}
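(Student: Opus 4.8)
The plan is to exhibit a small family of $k$-juntas whose isomorphic copies are mutually $\eps$-close, so that no algorithm — regardless of query budget — can distinguish them. The key observation is that local correction of $f$ requires, in particular, being able to tell apart the different isomorphisms $f_\sigma$ that are consistent with a given noisy input $g$: if two isomorphisms $f_{\sigma}$ and $f_{\tau}$ are $2\eps$-close to each other, then a single $g$ can be $\eps$-close to both, and on the points where $f_\sigma$ and $f_\tau$ disagree the algorithm cannot output the right value for both, so its success probability on a worst-case $x$ drops below $2/3$. (More carefully: pick $\sigma,\tau$ with $\dist(f_\sigma,f_\tau)$ small but positive, let $g$ be a function $\eps$-close to both — e.g. $g=f_\sigma$ if $\eps\ge 2\dist(f_\sigma,f_\tau)$, or an interpolant otherwise — and take $x$ in the disagreement set; the adversary picks whichever of the two worlds makes the algorithm fail.)

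So the task reduces to constructing, for every $k\ge k(\eps)$, a $k$-junta $f$ that admits two distinct isomorphic copies which are $2\eps$-close. First I would take $f$ to be a symmetric function of its $k$ relevant variables — e.g. a threshold or a "weight mod something" function — extended by ignoring the remaining $n-k$ variables. Since $f$ is symmetric in its relevant coordinates, permuting among those coordinates gives back $f$ itself, but a permutation $\sigma$ that swaps one relevant variable $x_i$ with one irrelevant variable $x_j$ produces an isomorphic copy $f_\sigma$ that depends on $x_j$ instead of $x_i$. The distance $\dist(f, f_\sigma)$ is then controlled by how often flipping the roles of these two coordinates changes the value: since both are "just another input bit" to a symmetric function of $k$ bits, this probability is $O(1/\sqrt{k})$ for a balanced threshold and $O(1/k)$ for suitably chosen symmetric $f$ (the value only changes when the two coordinates take opposite values \emph{and} that swap crosses the level set). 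Choosing $k$ large enough that this distance is below $2\eps$ gives the claim; this is exactly the role of $k(\eps)$.

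The main obstacle — really the only delicate point — is making the quantitative bound on $\dist(f_\sigma,f_\tau)$ clean and ensuring the two copies are genuinely distinct (i.e.\ $f_\sigma\ne f_\tau$, so there is a point $x$ where they disagree and the adversary has leverage). For a balanced threshold function $\Maj$-like construction the disagreement probability behaves like $\Theta(1/\sqrt{k})$, which is not below a fixed $\eps$ for all large $k$; so I would instead pick $f$'s symmetric part to be a function whose level sets are sparse, for instance $f(x)=1$ iff the Hamming weight of the $k$ relevant bits equals a fixed value $w=w(k)$, or iff it lies in a narrow window, chosen so that a single-coordinate swap changes the output with probability $O(1/k)\to 0$. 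Then for $k\ge k(\eps)$ this is $<2\eps$, two such isomorphic copies are $2\eps$-close yet distinct, and the first paragraph's argument finishes the proof. I would also remark that exactly the same construction works inside the class of $(n-k)$-symmetric functions, which is why the impossibility transfers to Theorem~\ref{thm:psfs} as well.
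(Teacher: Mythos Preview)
Your approach is correct and rests on the same impossibility principle the paper uses: if a single $g$ is $\eps$-close to two distinct isomorphisms $f_\sigma\neq f_\tau$, then on any $x$ with $f_\sigma(x)\neq f_\tau(x)$ no algorithm can output the right bit with probability $\ge 2/3$ in both worlds. The paper's execution, however, is considerably more direct. It takes $f$ to be the $k$-junta that equals $1$ precisely when $x_1=1$ and $x_2=\cdots=x_k=0$; since $\Pr[f=1]=2^{-k}$, for $k\ge\lceil\log_2(1/\eps)\rceil$ the constant function $g\equiv 0$ is $\eps$-close to \emph{every} isomorphism of $f$ at once, so there is no need to construct an interpolant between two particular copies or to make $f$ symmetric in its relevant coordinates. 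Your route via a symmetric weight-indicator and a relevant/irrelevant swap reaches the same conclusion with more machinery; what it buys is that the two-world adversary argument is spelled out explicitly, whereas the paper leaves that step implicit. One small slip: your dismissal of the $\Maj$-based construction is mistaken, since a disagreement probability of $\Theta(1/\sqrt{k})$ \emph{does} drop below any fixed $2\eps$ for sufficiently large $k$, so that construction would already have worked (just with a quadratically worse $k(\eps)$ than the exponential one you get from the sparse-level-set choice).
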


\begin{proof}
Fix some $\eps > 0$ and let $k(\eps) = \lceil \log 1/\eps \rceil$.
Given some $k \geq k(\eps)$, we consider the $k$-junta $f$ which is defined to be 1 only when the first
variable is 1 and the other $k-1$ variables following it are 0. This function is obviously a $k$-junta as it
is determined by the first $k$ variables only. Notice however that the constant zero function is
$2^{-k} \leq \eps$ close to $f$, and therefore if we try to locally correct it, we will not be able to identify which
isomorphism of $f$ we were given. Hence we would not be able to locally correct $f$, regardless of the
number of queries.
\end{proof}

The above result is rather extreme, in the sense that we have no way of identifying
which original isomorphism was chosen. For most functions this is not the case, and this allows us
to achieve the previous results. The last result we present in this work shows that locally
correcting most functions is relatively hard, even from the smallest possible error rate of a single error.

\begin{theo}\label{thm:most-functions}
Almost every function over $n$ variables cannot be locally corrected, even from a single error (i.e., $\eps = 2^{-n}$),
using fewer than $n/100$ non-adaptive queries.
\end{theo}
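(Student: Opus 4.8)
The plan is to show that a uniformly random $f\colon\ZZ_2^n\to\ZZ_2$ is, with probability $1-o(1)$, \emph{not} $q$-locally correctable from $\eps=2^{-n}$ by any non-adaptive algorithm with $q<n/100$. For $\tau\in\calS_n$ write $\pi_\tau(z)=(z_{\tau(1)},\dots,z_{\tau(n)})$, so that $f_\tau=f\circ\pi_\tau$ and $\pi_\tau$ is a weight-preserving bijection of $\ZZ_2^n$. First I would reduce to deterministic algorithms by a Yao-type averaging argument (more on this below); then fix $x_0\in\ZZ_2^n$ with $\mathrm{wt}(x_0)=\lfloor n/2\rfloor$ and suppose some deterministic non-adaptive algorithm corrects $f$ at the single input $x_0$, using a query set $Q(x_0)$ with $|Q(x_0)|\le q$ and a decision function $d$.

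The structural consequence of allowing one error is this. The adversary may present $g=f_\tau$ with its value at $x_0$ flipped, a function $2^{-n}$-close to $f_\tau$; so the algorithm must output $f_\tau(x_0)$ whether it reads $f_\tau(x_0)$ or its negation at the point $x_0$. Hence $d$ cannot depend on the bit queried at $x_0$, and the remaining $\le q$ queried bits $(f_\tau(y))_{y\in Q(x_0)\setminus\{x_0\}}$ must already determine $f_\tau(x_0)$ --- and this for \emph{every} $\tau$. Thus, with $S=Q(x_0)\setminus\{x_0\}$ (so $|S|\le q$), there is a function $D\colon\{0,1\}^S\to\{0,1\}$ with
\[
  f\bigl(\pi_\tau(x_0)\bigr)=D\Bigl(\bigl(f(\pi_\tau(y))\bigr)_{y\in S}\Bigr)\qquad\text{for all }\tau\in\calS_n.
\]

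Next I would show this ``refinement'' identity is extremely unlikely for a random $f$, uniformly over $S$ and $D$. Fix $S$ and $D$, and set $m=2^{n/50}$. One can pick $\tau_1,\dots,\tau_m$ so that the target points $\pi_{\tau_1}(x_0),\dots,\pi_{\tau_m}(x_0)$ are pairwise distinct and avoid all query points $\{\pi_{\tau_i}(y):i\le m,\ y\in S\}$: every target point ranges uniformly over the weight-$\lfloor n/2\rfloor$ slice, which has $\binom{n}{\lfloor n/2\rfloor}\ge 2^n/(n+1)$ elements, while at each step only $O(mq)$ points are forbidden, and $O(mq)=2^{n/50}\cdot\mathrm{poly}(n)\ll 2^n/(n+1)$, so a uniformly random $\tau_i$ works with positive probability. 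For such a choice the $m$ equalities $f(\pi_{\tau_i}(x_0))=D((f(\pi_{\tau_i}(y)))_{y\in S})$ have left-hand sides equal to $m$ distinct values of $f$ that do not appear on any right-hand side; conditioning on the right-hand sides, these are $m$ independent fair coins, so the identity holds with probability $\le 2^{-m}$. Union-bounding over the $\le 2^{nq+1}$ choices of $S$ and the $\le 2^{2^q}$ choices of $D$,
\[
  \Prob{f\text{ is }q\text{-locally correctable at }x_0}\ \le\ 2^{\,nq+1+2^q-m}\ =\ o(1),
\]
since $q<n/100$ makes $nq+1+2^q\le\mathrm{poly}(n)+2^{n/100}\ll 2^{n/50}=m$. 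As failure at the single input $x_0$ already contradicts local correction, this proves the theorem.

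The step that really needs care --- and the reason for the modest constant $n/100$ rather than something close to $n$ --- is the reduction to deterministic algorithms. A randomized corrector succeeds only with probability $\ge 2/3$, and naively fixing its randomness forces the refinement identity to hold not for all $\tau$ but only for a $\tfrac13$-fraction of permutations; for a random $f$ this weakened statement is \emph{not} rare, since its truth-set has expected size $n!/2$ (indeed a constant predictor already achieves it). The fix is to bring in the error-free inputs $(f_\tau,x_0)$: they force the algorithm to query $x_0$ with non-negligible probability and constrain, across different $\tau$, how its decision may use the bit read there. Combining these constraints with a more careful accounting over the algorithm's randomness (instead of collapsing it to a single setting) should recover the refinement phenomenon on a large enough fraction of permutations to run the counting argument --- at the cost of the explicit constant. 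I expect this to be the main obstacle; the rest is the routine estimate above.
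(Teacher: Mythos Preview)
Your proposal has a genuine gap at precisely the step you flag. After averaging out the algorithm's randomness you can only force the refinement identity $f(\pi_\tau(x_0))=D\bigl((f(\pi_\tau(y)))_{y\in S}\bigr)$ on a constant fraction of permutations $\tau$, and as you yourself note this fractional statement is \emph{not} rare for a random $f$: since $x_0\notin S$ makes $f(\pi_\tau(x_0))$ a fresh fair coin independent of the right-hand side, each fixed $\tau$ satisfies the identity with probability exactly $1/2$, the expected number of good $\tau$ is $n!/2$, and hence the union bound over $(S,D)$ cannot bite. Your suggested repair (``bring in the error-free inputs \ldots\ more careful accounting over the randomness'') is too vague to evaluate; nothing in it explains how to push the per-$\tau$ success probability appreciably above $1/2$ for many $\tau$ simultaneously, and that is exactly what is needed. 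This is not a matter of tweaking constants --- it is the heart of the argument.

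The paper bypasses the Yao reduction altogether. It invokes a result of Alon and Blais (from their isomorphism-testing lower bound): for almost every $f$ and any fixed set of $n/100$ balanced and $n/100$ unbalanced queries, over a uniformly random isomorphism $\sigma$, with probability $1-o(1)$ over the unbalanced answers every outcome of the balanced answers occurs with probability $(1\pm\tfrac13)2^{-|Q_b|}$. Placing the target $x_0$ among the balanced queries, the conditional law of $f_\sigma(x_0)$ given all the other answers then lies in $[1/3,2/3]$, so no predictor --- deterministic or randomized --- can output $f_\sigma(x_0)$ with probability $\ge 2/3$ over random $\sigma$; since a local corrector must succeed with probability $\ge 2/3$ for \emph{every} $\sigma$, this is already a contradiction. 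The single allowed error is used only to ensure the algorithm gains nothing from reading $x_0$ directly: the adversary simply overrides $g(x_0)$ to $0$. In short, the Alon--Blais proposition is precisely the missing lemma in your plan --- it asserts directly that no small non-adaptive predictor beats $2/3$ over a random isomorphism, which is what your counting argument was trying to extract after the Yao step.
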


The proof of Theorem~\ref{thm:most-functions} appears in Section~\ref{sec:conclusions} along with several
open questions. In Sections~\ref{sec:juntas} and~\ref{sec:psfs} we prove
Theorems~\ref{thm:juntas} and~\ref{thm:psfs}, respectively. The two proofs share a similar structure.

\section{Correcting juntas}
\label{sec:juntas}

Our approach for locally correcting juntas consists of two main
steps. The goal of the first step is to identify the junta
variables, i.e. those variables which determine the output of the
function. Since our query complexity should not depend on the input
size, one cannot hope to recover their exact location. Instead, we
use the testing-by-implicit-learning approach (see, e.g., \cite{Rocco}) and only identify
large sets which contain these variables. The second step, performed
after we have identified $k$ sets, each of which containing one of the junta
variables, is recovering their internal order (out of the $k!$
possible orderings). Once both these steps are completed, we would be able to
output the correct value of the function for the requested input.

In Section~\ref{sec:juntas-tools} we define some of the tools and
typical properties of juntas. The two steps of the algorithm are
later described in Secctions~\ref{sec:juntas-find-inf-sets} and
\ref{sec:juntas-alg}, completing the proof of
Theorem~\ref{thm:juntas}.

\subsection{Properties of juntas}\label{sec:juntas-tools}

Since juntas depend on a relatively small number of variables, we
often consider their concise representation over these variables
only. Given a $k$-junta $f$, we denote the \emph{core} of $f$ by
$f_{\core}: \ZZ_2^k \to \ZZ$, which is the function $f$ restricted
to its $k$ junta variables in their natural order.

A variable of a function is said to be \emph{influencing} if modifying its
value can modify the output of the function. Clearly a $k$-junta has
at most $k$ influencing variables, which are in fact the junta
variables (those which appear in its core). The following definition
quantifies how influential a variable, or more generally a set of
variables, is with respect to a given function.

\begin{definition}[Influence]
Given a Boolean function $f: \ZZ_2^n \to \ZZ_2$, the
\emph{influence} of a set of variables $J \subseteq [n] := \{1, 2, \ldots, n\}$ with
respect to $f$ is defined by
$$
\Inf_f(J) = \Pr_{x,y} \left[ f(x) \neq f(x_{\overline{J}}y_J)
\right]
$$
where $x_{\overline{J}}y_J$ is the vector whose $i$th coordinate
equals to $y_i$ if $i \in J$, and otherwise equals to $x_i$. When
the set $J = \{i\}$ is a singleton, we simply write $\Inf_f(i)$.
\end{definition}

An important property of influence is monotonicity. Namely, it is
known (see, e.g., \cite{FKRSS}) that for any two sets $J \subseteq
K$ and any function $f$, $\Inf_f(J) \leq \Inf_f(K)$. Given this
property, a set which has even a single variable with large
influence must also have large influence. We heavily rely on this
fact in our algorithm.

The result we present in this work is only applicable to most
juntas. The following two propositions indicate two typical
properties of juntas, which are required for our algorithm to work
with high probability. The first, presented in
Proposition~\ref{prop:juntas-high-inf}, indicates that in a typical
junta every influencing variable has constant influence. The second
property bounds the distance between a typical junta and its
isomorphisms, and is presented in Proposition~\ref{prop:juntas-no-iso}.
Notice that in both propositions, it suffices to consider the core
of the junta rather than the entire function.

\begin{prop}\label{prop:juntas-high-inf}
Let $f: \ZZ_2^k \to \ZZ_2$ be a random core of a $k$-junta. Then
with probability at least $1-2^{-\Omega(k)}$, any variable $i \in
[k]$ out of the $k$ variables of $f$ has influence $\Inf_f(i) >
0.1$.
\end{prop}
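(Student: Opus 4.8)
The plan is to show that for a uniformly random function $f:\ZZ_2^k\to\ZZ_2$, each fixed coordinate $i\in[k]$ has influence larger than $0.1$ except with probability $2^{-\Omega(k)}$, and then take a union bound over the $k$ coordinates. First I would rewrite $\Inf_f(i)$ in a form amenable to concentration. Pair up the $2^k$ inputs into $2^{k-1}$ disjoint pairs $\{z, z\oplus e_i\}$ differing only in coordinate $i$. For a uniformly random $f$, the values $f(z)$ are i.i.d.\ fair coin flips, so for each such pair the indicator that $f(z)\neq f(z\oplus e_i)$ is a Bernoulli($1/2$) random variable, and crucially these $2^{k-1}$ indicators are mutually independent (they depend on disjoint sets of input values). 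Since $\Inf_f(i)$ equals the fraction of pairs on which $f$ differs across coordinate $i$ — up to the elementary identity $\Pr_{x,y}[f(x)\neq f(x_{\overline{\{i\}}}y_{\{i\}})] = \frac12\Pr_{z}[f(z)\neq f(z\oplus e_i)]$ coming from the $1/2$ chance that $y_i\neq x_i$ — we have that $\Inf_f(i)$ is, after scaling, an average of $2^{k-1}$ independent Bernoulli($1/2$) variables with expectation $1/4$.

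Next I would apply a Chernoff/Hoeffding bound. The probability that this average of $2^{k-1}$ i.i.d.\ Bernoulli($1/2$) variables is at most, say, $0.4$ (so that $\Inf_f(i)\le 0.2$, comfortably below the slack we need) is at most $e^{-c\,2^{k-1}}$ for an absolute constant $c>0$; in fact any threshold below the mean $1/2$ gives an exponentially small bound in $2^{k-1}$, which is certainly $2^{-\Omega(k)}$ — indeed it is doubly strong, being exponentially small in $2^k$. Then a union bound over the $k$ coordinates multiplies this by $k$, which is absorbed into $2^{-\Omega(k)}$, yielding the claim with the stated probability $1-2^{-\Omega(k)}$. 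Note we actually obtain influence bounded below by a constant like $0.2$ or even close to $1/2$, so the constant $0.1$ in the statement is safely met.

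There is essentially no serious obstacle here; the only point requiring a little care is the bookkeeping in the first step — verifying the identity relating $\Inf_f(i)$ to the fraction of differing antipodal pairs in coordinate $i$, and confirming the independence of the pair-indicators (which holds because distinct pairs $\{z,z\oplus e_i\}$ involve disjoint input points, and $f$ is uniform). Once that is in place, the concentration inequality and union bound are routine. If one wanted the cleanest statement one could even dispense with the union bound subtlety by noting each individual failure probability is $2^{-\Omega(2^k)}$, so summing over $k$ coordinates (or even over all $2^k$ possible singletons, were that needed) costs nothing.
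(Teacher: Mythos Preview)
Your proposal is correct and follows essentially the same approach as the paper: pair the $2^k$ inputs into $2^{k-1}$ disjoint pairs along coordinate $i$, observe that the disagreement indicators are i.i.d.\ Bernoulli$(1/2)$ for a uniformly random $f$, apply a Chernoff-type tail bound to get failure probability $2^{-\Omega(2^k)}$, and union bound over the $k$ coordinates. The only cosmetic difference is that you are more explicit about the factor of $1/2$ relating $\Inf_f(i)$ to the fraction of disagreeing pairs and you use a slightly different threshold, but the argument is the same.
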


\begin{proof}
Let $f$ be a random function over $k$ variables. The influence of
some variable $i$ is determined by the number of pairs of inputs,
which differ only on the coordinate $i$, that disagree on the
output. Since each output of the function is chosen independently,
and as these $2^{k-1}$ pairs are disjoint, this is in fact a
binomial random variable. The influence of variable $i$ is
less than $0.1$ only if at most $1/5$ of these pairs disagree.
This probability is thus
$$
\Pr [B(2^{k-1}, 0.5) <  \tfrac{1}{5} \cdot 2^{k-1} ] < 2^{-c2^k}
$$
for some absolute constant $c > 0$, where here B is the binomial
distribution and we applied one of the standard estimates for
binomial distributions (cf., e.g. \cite{AS}, Appendix A). Therefore,
by the union bound, all k variables have influence greater
than 0.1 with probability $1-2^{-\Omega(k)}$.
\end{proof}

\begin{prop}\label{prop:juntas-no-iso}
Let $f: \ZZ_2^k \to \ZZ_2$ be a random core of a $k$-junta. Then
with probability at least $1-2^{-\Omega(k)}$, $f$ is $0.1$-far from
any non-trivial isomorphic function.
\end{prop}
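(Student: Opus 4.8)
The plan is to show that for a random core $f : \ZZ_2^k \to \ZZ_2$, the probability that some non-identity permutation $\pi \in \calS_k$ yields $f_\pi$ which is $0.1$-close to $f$ is exponentially small. Fix a permutation $\pi \neq \mathrm{id}$. The key observation is that for most inputs $x \in \ZZ_2^k$, the value $f(x_{\pi(1)}, \ldots, f x_{\pi(k)})$ is an \emph{independent} fresh coin relative to $f(x)$: this fails only when $x$ is a fixed point of $\pi$ acting on coordinates, i.e. $x_{\pi(i)} = x_i$ for all $i$. The number of such fixed points is $2^{c(\pi)}$, where $c(\pi)$ is the number of cycles of $\pi$, and since $\pi \neq \mathrm{id}$ we have $c(\pi) \leq k-1$, so at least half the inputs are non-fixed. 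More carefully, one should group the $2^k$ inputs into orbits under the permutation action; on each orbit of size $\geq 2$, the values of $f$ at distinct orbit elements are i.i.d. uniform bits, so the disagreement pattern between $f$ and $f_\pi$ restricted to that orbit behaves like a random cyclic shift comparison and disagrees with probability bounded below by a constant.

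First I would set up the orbit decomposition of $\ZZ_2^k$ under the map $x \mapsto (x_{\pi(1)}, \ldots, x_{\pi(k)})$, and note that the orbits of size $1$ number exactly $2^{c(\pi)} \leq 2^{k-1}$, leaving at least $2^{k-1}$ inputs in orbits of size $\geq 2$. Then, conditioning on the values of $f$ outside these large orbits, within each large orbit of size $m \geq 2$ the $m$ values $f(x), f(\text{shift of }x), \ldots$ are i.i.d. uniform, and the indicator that $f$ disagrees with its shift at a uniformly chosen element of that orbit has expectation exactly $1/2 \cdot (1 - 1/2^{?})$ — in any case bounded below by, say, $1/4$. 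Summing over all large orbits, $\Exp{\dist(f, f_\pi)}$ is at least some constant $\delta > 0.1$ (one can be generous here; even $\delta \geq 0.2$ is easy), and $\dist(f, f_\pi)$ is a sum of independent contributions from disjoint orbits, so a Chernoff/Azuma bound gives $\Prob{\dist(f, f_\pi) < 0.1} \leq 2^{-\Omega(2^k)}$, since the total number of inputs is $2^k$ and each orbit contributes a bounded amount.

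Finally I would take a union bound over all $\pi \in \calS_k \setminus \{\mathrm{id}\}$, of which there are fewer than $k! \leq 2^{k \log k}$; since $2^{k\log k} \cdot 2^{-\Omega(2^k)} = 2^{-\Omega(2^k)}$ is still much smaller than $2^{-\Omega(k)}$, the claim follows. The main technical obstacle is handling permutations with many fixed points on coordinate positions carefully — a transposition, for instance, fixes $2^{k-1}$ inputs as coordinate-fixed-points, so one must make sure the \emph{remaining} $2^{k-1}$ inputs still contribute enough expected distance and enough independence for concentration; this is where the orbit decomposition does the work, since even for a transposition the large orbits partition half the cube into independent blocks. A secondary point worth stating cleanly is that the number of coordinate-fixed-points $2^{c(\pi)}$ is maximized over $\pi \neq \mathrm{id}$ by transpositions (giving $2^{k-1}$), so in all cases at least a $1/2$ fraction of inputs lie in nontrivial orbits, which suffices.
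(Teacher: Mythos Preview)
Your proposal is correct and follows essentially the same approach as the paper: decompose $\ZZ_2^k$ into orbits (the paper calls them ``chains'') under the coordinate action of $\pi$, observe that at most $2^{k-1}$ inputs are fixed, extract enough independence on the non-fixed inputs to get concentration of $\dist(f,f_\pi)$ around a constant $>0.1$ with failure probability $2^{-\Omega(2^k)}$, and union-bound over the $k!-1$ non-trivial permutations. The only difference is in how the concentration is justified: the paper explicitly extracts $\geq 2^k/4$ mutually independent Bernoulli$(1/2)$ events (from a chain of length $i\geq 2$ one gets $i-1$ independent events $[f(x)\neq f(\pi x)]$, since consecutive XORs of i.i.d.\ bits are i.i.d.) and applies a plain binomial tail bound, whereas your appeal to ``independent contributions from disjoint orbits'' would need a slightly more careful statement---the cleanest fix is McDiarmid on the $2^k$ independent bits $f(x)$, each of which affects $\dist(f,f_\pi)$ by at most $2/2^k$.
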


\begin{proof}
Let $f$ be a random function of $k$ variables and let $\pi \in
\calS_k$ be any non-trivial permutation. Our goal is to show that
$\Pr_x [f(x) \neq f_{\pi}(x)] > 0.1$ for every such $\pi$, with high
probability (where the probability is over the choice of $f$ and
applies to all permutations simultaneously). We will do a similar
calculation to the one above. Here however, we do not have a nice
partition of the inputs into disjoint pairs, as some inputs remain
unchanged after applying the permutation. Consider the partition of
the inputs according to $\pi$ into chains, that is elements $x, \pi
x, \pi^2 x, \ldots, \pi^i x = x$. Notice that if $\pi$ is not the
identity, there are at most $2^k/2$ chains of length 1 (half of
the elements). Looking at the elements of a chain of length $i \geq
2$, $i-1 \geq \lceil i/2 \rceil $ of them result in pairs $x, \pi x$ so that
all these events $f(x) \neq f(\pi x)$ are mutually independent.
Thus in total we have at
least $2^k/4$ independent samples.

As before, we can now bound the probability that
$\Pr_x [f(x) \neq f_{\pi}(x)] < 0.1$ by the probability that at most
$2/5$ of these pairs would disagree (as with probability at least
$1/4$ we fall into an element from our independent samples). We
bound this probability by
$$
\Pr [B(2^{k}/4, 0.5) <  \tfrac{2}{5} \cdot \tfrac{2^{k}}{4} ] <
2^{-c'2^k}
$$
for some absolute constant $c' > 0$, where again we applied one of
the standard estimates for binomial distributions. Since there are
only $k!-1$ non-trivial permutations, we can apply the union bound and
conclude that over the choice of $f$, it is $0.1$-far from all its
non-trivial isomorphic copies with probability at least
$1-2^{-\Omega(k)}$.
\end{proof}

\subsection{Finding the influencing sets}\label{sec:juntas-find-inf-sets}

In order to find the influencing sets, we first need a way to
estimate the influence of a set by querying the input function. To
this end we use the following natural algorithm
\textsc{Estimate-Influence}.

\begin{algorithm}[tbh]
  \caption{\textsc{Estimate-Influence}$(f, J, \delta, \eta)$}
  \begin{algorithmic}[1] \label{alg:est-inf}
   \STATE Set $q = \lceil \tfrac {\ln 2/\eta}{2\delta^2} \rceil$ and $X = 0$.
   \FOR{$i=1$ to $q$}
    \STATE Pick two random inputs $x,y \in \ZZ_2^n$.
    \STATE Increase $X$ by 1 if $f(x) \neq f(x_{\overline{J}}y_J)$.
   \ENDFOR
   \STATE Return $X/q$.
  \end{algorithmic}
\end{algorithm}

\begin{prop}
For any function $f$, a set of variables $J$ and two constants
$\delta, \eta \in (0,1)$, the algorithm
\textsc{Estimate-Influence}$(f, J, \delta, \eta)$ returns a
value within distance $\delta$ of $\Inf_f(J)$ with probability at
least $1-\eta$, by performing $O(\delta^{-2} \log 1/\eta)$
non-adaptive queries to $f$.
\end{prop}

\begin{proof}
The proof is a direct application of the Chernoff bound as
$\Exp{X/q} = \Inf_f(J)$ and we deviate by more than $\delta$
with probability at most $2\exp(-2\delta^2q)$.
\end{proof}

In our scenario, however, we also need to consider the fact that we
query a noisy version of the function. The following proposition
shows that the noise cannot modify the influence of a set by too
much, and therefore we can still estimate correctly which sets have
significant influence.

\begin{prop}\label{prop:inf-distance}
Let $f$ and $g$ be any two functions which are $\eps$-close. Then
for every set $J \subseteq [n]$ of variables, $| \Inf_f(J) -
\Inf_g(J) | \leq 2\eps$.
\end{prop}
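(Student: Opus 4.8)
The plan is to couple the two influence experiments on the same random pair $(x,y)$ and show they can only disagree on a small-probability event. Recall that $\Inf_f(J) = \Pr_{x,y}\left[ f(x) \neq f(x_{\overline{J}}y_J) \right]$ with $x,y \in \ZZ_2^n$ independent and uniform, and similarly for $g$. The first thing I would record is that each of the two arguments of $f$ appearing here is \emph{individually} uniformly distributed on $\ZZ_2^n$: the vector $x$ is uniform by assumption, and $x_{\overline{J}}y_J$ is uniform as well, since its coordinates inside $J$ are taken from the independent uniform $y$ and its coordinates outside $J$ from the independent uniform $x$.

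Given this, I would invoke the closeness hypothesis twice. Since $f$ and $g$ are $\eps$-close, $\Pr_{x,y}\left[ f(x) \neq g(x) \right] \leq \eps$, and likewise $\Pr_{x,y}\left[ f(x_{\overline{J}}y_J) \neq g(x_{\overline{J}}y_J) \right] \leq \eps$. By the union bound, with probability at least $1 - 2\eps$ over the choice of $(x,y)$ we have simultaneously $f(x) = g(x)$ and $f(x_{\overline{J}}y_J) = g(x_{\overline{J}}y_J)$. On this event the indicator of $f(x) \neq f(x_{\overline{J}}y_J)$ equals the indicator of $g(x) \neq g(x_{\overline{J}}y_J)$, so the two Bernoulli events defining $\Inf_f(J)$ and $\Inf_g(J)$ differ only on an event of probability at most $2\eps$. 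Hence $\left| \Inf_f(J) - \Inf_g(J) \right| \leq 2\eps$, as claimed.

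There is no real obstacle in this argument; it is a one-line coupling plus a union bound. The only point that deserves a moment of care is the uniformity of $x_{\overline{J}}y_J$, which is exactly what licenses applying the $\eps$-closeness of $f$ and $g$ to the second argument as well — without it one could only bound the difference by $\eps$ plus an uncontrolled term.
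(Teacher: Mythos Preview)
Your proof is correct and is essentially the same argument as the paper's: the paper phrases it as a triangle inequality $\Pr[f(x)\neq f(x_{\overline{J}}y_J)] \leq \Pr[f(x)\neq g(x)] + \Pr[g(x)\neq g(x_{\overline{J}}y_J)] + \Pr[g(x_{\overline{J}}y_J)\neq f(x_{\overline{J}}y_J)]$, which is exactly your union bound in disguise. The one point you make explicit that the paper leaves implicit is the uniformity of $x_{\overline{J}}y_J$, which is indeed what justifies bounding the third term by $\eps$.
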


\begin{proof}
Fix $J$ to be some subset of the variables and let $f$ and $g$ be
two functions which are $\eps$-close. We prove that $\Inf_f(J) \leq
\Inf_g(J) + 2\eps$, from which the proposition follows by simply
replacing the roles of $f$ and $g$. By the triangle inequality,
\begin{eqnarray*}
 \Inf_f(J)
  & = & \Pr_{x,y} \left[ f(x) \neq f(x_{\overline{J}}y_J) \right] \\
  & \leq &
    \Pr_{x} \left[ f(x) \neq g(x) \right] +
    \Pr_{x,y} \left[ g(x) \neq g(x_{\overline{J}}y_J) \right] +
    \Pr_{x,y} \left[ g(x_{\overline{J}}y_J) \neq f(x_{\overline{J}}y_J) \right] \\
  & \leq & \Inf_g(J) + 2\eps \ .
\end{eqnarray*}
\end{proof}

We are now ready to describe the first step in our algorithm for
local correction of juntas, which is identifying the influencing
sets. A crucial restriction we must assume over the function we try
to correct is that the influence of any influencing variable is
significant enough, so we can identify them in spite of the noise
(namely, it should satisfy Proposition~\ref{prop:juntas-high-inf}).

The algorithm \textsc{Find-Influencing-Sets} is given a function
$f$, a partition of the variables $\calI$, a size parameter $k$ and
a noise parameter $\eps$. The algorithm returns all parts in the
partition $\calI$ which it considered as influential with respect to
$\eps$. We will later see that in the scenario we apply it, the
returned sets are exactly those which contain significantly
influencing variables of $f$, with high probability.

\begin{algorithm}[tbh]
  \caption{\textsc{Find-Influencing-Sets}$(f, \calI = \{I_1, \ldots, I_s\}, k, \eps)$}
  \begin{algorithmic}[1] \label{alg:find-inf-sets}
   \STATE Fix $r = \lceil 12k \ln s \rceil $ and $S = [s]$.
   \FOR{each of $r$ rounds}
    \STATE Pick a random subset $T \subseteq [s]$ by including each index independently with probability $1/k$.
    \STATE Define $J = \cup_{i \in T} I_i$ to be the union of sets in $\calI$ according to
    the indices of $T$.
    \STATE If \textsc{Estimate-Influence}$(f, J, \epsilon, 1/20r) \leq 3\eps$, set $S = S \setminus T$.
   \ENDFOR
   \STATE Return $\{I_i\}_{i \in S}$
  \end{algorithmic}
\end{algorithm}

\begin{remark}
The algorithm \textsc{Find-Influencing-Sets} is very similar to the
algorithm \textsc{BlockTest} defined in
\cite{Blais-non-adaptive-juntas}. The main difference is in the
noise tolerance behavior. In the original algorithm a part $I_j$ was
marked as non influential, and was removed from $S$, only if its
estimated influence was precisely 0. Here however we mark such a
part as non-influential even if it has some influence, but as long
as our estimate for it is small enough.
\end{remark}

\begin{lemma}\label{lem:find-inf-sets}
Let $f$ be a $k$-junta whose influencing variables each has
influence of at least $6\eps$ for some $\eps>0$, and they are
separated by a partition $\calI$, of size $|\calI| = O(k^2)$,
$|\calI| > 5$. Then for every function $g$ which is $\eps$-close to
$f$, \textsc{Find-Influencing-Sets}$(g, \calI, k, \eps)$ returns
exactly the $k$ sets containing the influencing variables of $f$
with probability at least $9/10$, by performing $O(k \log^2
k/\eps^2)$ non-adaptive queries to $g$.
\end{lemma}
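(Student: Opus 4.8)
The plan is to analyze the algorithm \textsc{Find-Influencing-Sets} by separately controlling two failure events: that some influencing set gets wrongly removed from $S$, and that some non-influencing set survives all $r$ rounds. First I would set up notation: call a part $I_i$ \emph{good} if it contains one of the $k$ influencing variables of $f$, and \emph{bad} otherwise; there are exactly $k$ good parts by hypothesis. For the query bound, note each round calls \textsc{Estimate-Influence} with accuracy parameter $\delta = \eps$ and confidence $1/(20r)$, costing $O(\eps^{-2}\log r) = O(\eps^{-2}\log(k\log s)) = O(\eps^{-2}\log k)$ queries (using $s = O(k^2)$), and there are $r = \lceil 12 k\ln s\rceil = O(k\log k)$ rounds, for a total of $O(k\log^2 k/\eps^2)$ non-adaptive queries; non-adaptivity is inherited because \textsc{Estimate-Influence} is non-adaptive and the random sets $T$ are chosen in advance independently of answers.

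Next I would condition on the event $\mathcal{E}$ that \emph{every} one of the $r$ influence estimates is within $\eps$ of the true value $\Inf_g(J)$; by the union bound over $r$ rounds and the confidence parameter $1/(20r)$, this fails with probability at most $1/20$. Combined with Proposition~\ref{prop:inf-distance}, under $\mathcal{E}$ every estimate is within $\eps$ of $\Inf_g(J)$, hence within $3\eps$ of $\Inf_f(J)$ (since $|\Inf_f(J)-\Inf_g(J)|\le 2\eps$). Now for a good part $I_i$: whenever it is included in $J$, monotonicity of influence gives $\Inf_f(J) \ge \Inf_f(i) \ge 6\eps$, so under $\mathcal{E}$ the estimate is at least $6\eps - 3\eps = 3\eps$, and more carefully $\ge 6\eps - \eps - 2\eps = 3\eps$, which does \emph{not} trigger the removal condition ``estimate $\le 3\eps$''; here I should double-check the strictness of the inequality (the algorithm removes when $\le 3\eps$, so I want the estimate strictly above $3\eps$, which holds since $\Inf_f(i)\ge 6\eps$ gives estimate $\ge 3\eps$; to get strict inequality one either uses that the high-influence hypothesis can be taken as strict, e.g. $>6\eps$, or absorbs a factor — this is the one spot needing a touch of care). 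Therefore good parts are never removed under $\mathcal{E}$, and the algorithm always returns a superset of the $k$ good parts.

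For the other direction, I would show that under $\mathcal{E}$, with probability at least $9/10 + 1/20$ over the random choices of the $T$'s, every bad part is removed at some round. Fix a bad part $I_j$. Consider a round in which $T$ contains $j$ but contains \emph{none} of the $k$ good parts: this happens with probability $\frac{1}{k}(1-\frac1k)^k \ge \frac{1}{k}\cdot\frac14 = \frac{1}{4k}$ (for $k\ge 2$). In such a round $J$ is a union of bad parts plus $I_j$, all disjoint from the influencing variables, so $\Inf_f(J) = 0$, hence $\Inf_g(J) \le 2\eps$, hence under $\mathcal{E}$ the estimate is $\le 3\eps$ and $I_j$ (together with all of $T$) is removed from $S$. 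The probability that no such favorable round occurs for $I_j$ in $r$ rounds is at most $(1-\frac{1}{4k})^r \le e^{-r/(4k)} \le e^{-3\ln s} = s^{-3}$, using $r \ge 12 k\ln s$. Union-bounding over at most $s$ bad parts, all are removed except with probability $s^{-2} \le 1/20$ (using $|\calI| > 5$, or simply that $s$ is large).

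Combining: the algorithm returns exactly the $k$ good sets unless $\mathcal{E}$ fails (probability $\le 1/20$) or some bad part survives (probability $\le 1/20$ conditioned on $\mathcal{E}$), for a total failure probability at most $1/10$, as claimed. The main obstacle I anticipate is the careful bookkeeping of the constants — ensuring the $6\eps$ influence threshold, the $3\eps$ removal threshold, the $\eps$ estimation accuracy, and the $12k\ln s$ round count all fit together with the right (strict vs. non-strict) inequalities and leave enough probability slack for the two union bounds; the probabilistic structure itself is straightforward once $\mathcal{E}$ is fixed, since conditioning on $\mathcal{E}$ makes all estimates deterministic functions of the (still independent) random sets $T$.
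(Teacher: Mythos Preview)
Your proposal is correct and follows essentially the same approach as the paper: condition on all influence estimates being within $\eps$ (probability $\ge 19/20$), then use monotonicity plus Proposition~\ref{prop:inf-distance} to show good parts survive, and a per-round ``isolating'' probability of $\ge 1/(4k)$ together with $r = \lceil 12k\ln s\rceil$ and a union bound over $s$ parts to show bad parts are eliminated with probability $\ge 19/20$. The only differences are cosmetic (you bound the per-part survival by $s^{-3}$ where the paper writes $1/(20s)$), and your flagged strict-versus-nonstrict issue at the $3\eps$ threshold is the same borderline the paper silently absorbs.
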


\begin{proof}
Fix $\eps > 0$ and let $f, k, \calI$ and $g$ be as described in the
lemma. We first note that the query complexity is indeed $O(k \log^2
k / \eps^2)$ as we have $r = O(k \log s) = O(k \log k)$ rounds,
assuming $s = O(k^2)$, and in each round we perform $O(\log s /
\eps^2) = O(\log k / \eps^2)$ queries. Moreover, the queries are all
non-adaptive as we only apply the \textsc{Estimate-Influence}
algorithm which is non-adaptive as well.

By the analysis of \textsc{Estimate-Influence} and the parameters we
provide it, we know it would deviate by more than $\eps$ with
probability at most $1/20r$. Since we invoke it once per round and
there are only $r$ rounds, by the union bound they would all deviate
by at most $\eps$ simultaneously with probability at least $19/20$.
Assuming this is indeed the case, what remains to be shown is that
every set containing an influencing variable would be returned, and
only those.

Let $I$ be a set containing an influencing variable of $f$. Since we
know each influential variable of $f$ has influence at least
$6\eps$, by monotonicity of influence we have $\Inf_f(J) \geq
\Inf_f(I) \geq 6\eps$ for any set $J$ such that $I \subseteq J$.
Moreover, since $g$ is $\eps$-close to $f$, by
Proposition~\ref{prop:inf-distance} we have $\Inf_g(J) \geq 4\eps$
for any such set $J$. As we assumed all calls to
\textsc{Estimate-Influence} deviated by at most $\eps$, we would not
flag the set $I$ as non-influential at any round (and thus it would
be returned).

Consider now the case that $I$ contains no influencing variable of
$f$. It suffices to show that at some round, the set $J$ would
contain $I$ but no other set $I'$ which contains an influencing
variable. If there was such a round, then $\Inf_g(J) \leq \Inf_f(J)
+ 2\eps = 2\eps$ and we would estimate it correctly to be at most
$3\eps$ by our assumption. Since there are at most $k$ sets with
influencing variables, at each round, $J$ would include $I$ and no
other set with influencing variables with probability at least
$(1/k)(1-1/k)^k \geq 1/4k$ for $k \geq 2$. We can now bound the
probability that the set $I$ would be incorrectly returned by
$(1-1/4k)^r\leq e^{-r/4k} \leq e^{-3\ln s} < \tfrac{1}{20s}$. By
applying the union bound over all the parts in $\calI$, we get that
with probability at least $19/20$, all sets without influencing
variables would not be returned.

Combining both our assumptions, each occurring with probability at
least $19/20$, we indeed showed that with probability at least
$9/10$, precisely the sets which contain the influencing variables
of $f$ would be returned.
\end{proof}

\subsection{The algorithm}\label{sec:juntas-alg}

Before we proceed to describe the complete algorithm, we need some
additional definitions. Since we have no intention to identify the
exact location of the influencing variables, and we only identify
sets which contain them, it is natural to query the function at
inputs which are constant over the sets in a given partition.
For this purpose, we use the following distribution.

\begin{definition}
Let $\calI = \{I_1, \ldots, I_{2s}\}$ be a partition of $[n]$ into
an even number of parts (where some parts may be empty). The
distribution $\calDI$ over $y \in \ZZ_2^n$ is defined as follows.
\begin{itemize}
 \item Choose $z \in \ZZ_2^{2s}$ to be a random balanced vector of Hamming weight $|z| = s$.
 \item Define $y \in \ZZ_2^n$ such that for every $I_j \in \calI$ and $i \in I_j$, $y_i = z_j$.
\end{itemize}
\end{definition}

\begin{prop}[\cite{CGM}]\label{prop:junta-sample}
Let $J=\{j_1, \ldots, j_k\} \subseteq [n]$ be a set of size $k$, and let
$s = \Omega(k^2)$ be even. The distribution $\calDI$ satisfies the
following conditions.
\begin{itemize}
 \item For every $x\in\ZZ_2^n$, $\Pr_{\calI, y \sim \calDI} [ y = x ] =
 2^{-n}$ given that the partition $\calI$ was chosen at random.

 \item The marginal distribution of $y$ over the set of indices $J$ is
  $4k^2/s$-close to uniform over $\ZZ_2^{k}$ (in total variation distance),
  for a fixed partition $\calI$ which separates the variables of $J$.
\end{itemize}
\end{prop}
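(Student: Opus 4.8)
The plan is to establish the two bullets separately: the first via a short conditioning/averaging argument, and the second via a standard sampling-without-replacement estimate.

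For the first bullet, I will use that choosing $\calI$ at random means assigning each coordinate $i\in[n]$ independently and uniformly to one of the $2s$ parts; write $\pi(i)$ for the index of the part containing $i$, so the $\pi(i)$ are i.i.d.\ uniform on $[2s]$ and independent of $z$. Fix $x\in\ZZ_2^n$ and condition on the balanced vector $z$. Since $|z|=s$, exactly $s$ parts have $z$-value $0$ and exactly $s$ have $z$-value $1$. The event $y=x$ is the event that $z_{\pi(i)}=x_i$ for every $i$, i.e.\ that each $\pi(i)$ falls into one of the $s$ parts whose $z$-value equals $x_i$; this happens with probability exactly $s/2s=1/2$, independently over $i$. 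Hence $\Pr[\,y=x\mid z\,]=2^{-n}$ for every balanced $z$, and averaging over $z$ gives the claim (note this part does not use $s=\Omega(k^2)$).

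For the second bullet, fix a partition $\calI$ separating $J=\{j_1,\dots,j_k\}$ and let $a_1,\dots,a_k$ be the distinct indices of the parts containing $j_1,\dots,j_k$. Then $y$ restricted to $J$ equals $(z_{a_1},\dots,z_{a_k})$, i.e.\ the values at $k$ fixed distinct coordinates of a uniformly random weight-$s$ vector $z\in\ZZ_2^{2s}$ --- equivalently, a size-$k$ draw without replacement from an urn with $s$ zeros and $s$ ones. I would bound the distance to uniform by revealing the coordinates one at a time: conditioned on $z_{a_1},\dots,z_{a_{i-1}}$ having $m\le i-1$ ones, $\Pr[z_{a_i}=1\mid z_{a_1},\dots,z_{a_{i-1}}]=\tfrac{s-m}{2s-i+1}$, which differs from $\tfrac12$ by at most $\tfrac{i-1}{2(2s-i+1)}\le\tfrac{i-1}{2s}$ once $s\ge k$ (which holds since $s=\Omega(k^2)$). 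Plugging these per-coordinate biases into the standard hybrid bound --- the total variation distance between two distributions on $\ZZ_2^k$ written as products of conditionals is at most the sum over coordinates of the worst-case conditional total variation distances --- shows that the law of $y|_J$ is within $\sum_{i=1}^k\tfrac{i-1}{2s}=\tfrac{k(k-1)}{4s}\le\tfrac{4k^2}{s}$ of uniform, as required.

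The only point that needs care is the precise meaning of ``chosen at random'' in the first bullet: independence and uniformity of the coordinate-to-part assignments is exactly what makes $\Pr[y=x]=2^{-n}$ hold for all $x$ (e.g.\ insisting that all parts have equal size would force this probability to $0$ for every $x$ with $|x|\neq n/2$), and once this convention is fixed the argument is immediate. The second bullet is routine given the hybrid-argument bookkeeping above; in fact it yields the slightly stronger bound $k(k-1)/(4s)$, so the stated $4k^2/s$ leaves ample slack.
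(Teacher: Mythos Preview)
The paper does not supply its own proof of this proposition: it is quoted from \cite{CGM} and used as a black box, so there is nothing in the present paper to compare your argument against.

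That said, your proof is correct and self-contained. For the first bullet you correctly use the paper's convention (stated just before Algorithm~\ref{alg:local-corr-juntas}) that a random partition assigns each coordinate to a part independently and uniformly; conditioning on the balanced $z$ and then using independence of the $\pi(i)$ is exactly the right move and makes the $2^{-n}$ claim immediate. For the second bullet, the identification of $y|_J$ with $k$ fixed coordinates of a uniform weight-$s$ vector in $\ZZ_2^{2s}$ is valid once $\calI$ separates $J$, and your sequential-revelation/hybrid bound is a clean way to control the deviation from uniform. The arithmetic is right: the conditional bias at step $i$ is $|i-1-2m|/(2(2s-i+1))\le (i-1)/(2s)$ once $s\ge k$, and summing gives $k(k-1)/(4s)$, well inside the stated $4k^2/s$. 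One small remark: your hybrid inequality is cleanest to justify by noting that the target distribution is i.i.d.\ $\mathrm{Bernoulli}(1/2)$, so its conditionals do not depend on the history, which removes any subtlety about mismatched histories in the coupling.
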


\begin{remark}
In our scenario, we sometimes use the distribution $\calDIzo$
where $\calI_0$ and $\calI_1$ are random partitions of $X_0$ and
$X_1$, such that $|\calI_0| = |\calI_1| = 2s$, and where $X_0 \cup
X_1 = [n]$ are a partition of $[n]$ into two parts. We define $y
\sim \calDIzo$ to be a merge of $y_0 \sim \calDIz$ and $y_1 \sim
\calDIo$ in the following way. The vector $y$ is the unique vector
for which $y_{X_0} = y_0$ and $y_{X_1} = y_1$. Thus, the first item
of Proposition~\ref{prop:junta-sample} holds as is, and in the
second item we have an additional factor of 2.
\end{remark}

The full algorithm is described as Algorithm~\ref{alg:local-corr-juntas} below.
From this point on, whenever we draw a random partition of some
set, we do so by assigning each element into one of the parts
uniformly and independently at random (which may result in some empty parts).
Recall that our goal is to locally correct a given function $g$,
which is $\eps$-close to $f_{\sigma}$, by returning the value
$f_{\sigma}(x)$ for the given input $x$. Additionally, $f$ is a
$k$-junta whose core is known to the algorithm, and we can and will
restrict ourselves to such functions which satisfy typical
conditions (namely Propositions~\ref{prop:juntas-high-inf} and
\ref{prop:juntas-no-iso}). Notice that in our scenario, the smaller
$\eps$ is the easier it is to correct so it suffices to show that
for $\eps=0.001$, the algorithm succeeds with good probability and
with the requested query complexity.

\begin{algorithm}[tbh] 
  \caption{\textsc{Locally-Correct-Junta}$(f_{\core}, k, g, x)$}
  \begin{algorithmic}[1] \label{alg:local-corr-juntas}
   \STATE Fix $s = 400k^2$ and $r = 2500 \lceil k\log k \rceil $.
   \FOR{$j \in \{0,1\}$}
     \STATE Let $X_j = \{ i \in [n] \mid x_i = j \}$.
     \STATE Randomly partition $X_j$ into $\calI_j$, consisting of (potentially) $s$ parts.
   \ENDFOR
   \STATE Define $\calI = \calI_0 \cup \calI_1$ to be the partition of $X_0 \cup X_1 = [n]$.
   \STATE Invoke \textsc{Find-Influencing-Sets}$(g, \calI, k, 0.01)$ and assign the result into $\calJ = \{ I_{a_1}, I_{a_2}, \ldots \}.$
   \STATE If $ | \calJ | \neq k$, or if $\calJ$ contains an empty set, return 0.
   \STATE Let $B = (b_1, \ldots, b_k)$ be an arbitrary ordered set such that $b_i \in I_{a_i}$ for every $i \in [k]$.
   \STATE For every $\ell \in [r]$, randomly sample $y^\ell \sim \calDIzo$ and query $g(y^\ell)$.
   \STATE Let $\pi \in \calS_k$ be the permutation which maximizes the number of indices
   $\ell$ for which\\ $g(y^\ell) = f_{\core}(y^\ell_{B_{\pi}})$, where $B_{\pi} = (b_{\pi(1)}, \ldots, b_{\pi(k)})$.
   \STATE Return $f_{\core}(x_{B_{\pi}})$
  \end{algorithmic}
\end{algorithm}

\begin{proof}[Proof of Theorem~\ref{thm:juntas}]
First, we analyze the query complexity of the algorithm. All the
queries the algorithm perform are by invoking
\textsc{Find-Influencing-Sets} and querying $y$ which was chosen
according to $\calDI$. In both cases, the queries only depend on the
partitions $\calI_0$ and $\calI_1$, and therefore they are
non-adaptive. The number of queries in these parts are $O(k\log^2k)$
and $O(k\log k)$ respectively, and thus the algorithm performs a
total of $O(k\log^2k)$ non-adaptive queries as required.

Let $f$ be a function which satisfies the conditions of
Propositions~\ref{prop:juntas-high-inf} and
\ref{prop:juntas-no-iso}. As each condition is satisfied with
probability at least $1-2^{-\Omega(k)}$, it suffices to show the
algorithm succeeds with high probability for such functions.

The success of the algorithm depends on the following three events.
The first, we need the partition $\calI$ to separate all the junta's
influencing variables. In each set of variables, $X_0$ and $X_1$,
there are at most $k$ influencing variables. Since we partition each
of them into $s$ parts at random, we would have such a bad collision
with probability at most $2{k \choose 2}/s < 1/30$ for our choice of
$s$.

The second event is identifying the correct sets, those for which $f_{\sigma}$ has influencing variables.
Assuming there were no collisions in the partition, by Lemma~\ref{lem:find-inf-sets} we will
identify precisely the influencing sets with probability at least $9/10$.

The third and last event which remains is correctly choosing the
permutation $\pi$. Here we rely on the fact that the core of $f$ is
not close to any non trivial permutation of itself. Since our
estimates are performed using queries according to $\calDIzo$, we
need to estimate how accurate they are. By
proposition~\ref{prop:junta-sample}, as each sample $y$ is
distributed uniformly, the distance between $g(y)$ and
$f_{\sigma(y)}$ is at most $0.001$ over the choice of $\calI_0$ and
$\calI_1$. By applying Markov's inequality, our partitions will
satisfy $\Pr_{y\sim \calDIzo}[g(y) \neq f_{\sigma}(y)] \leq 0.01$
with probability at least $9/10$.

Assume from this point on that the partitions $\calI_0$ and
$\calI_1$ were good, namely, satisfying the above inequality. By the
second part of proposition~\ref{prop:junta-sample}, the marginal
distribution of $y$ over the influencing variables, and hence over
the influencing sets (for each partition $\calI_j$), is
$4k^2/s$-close to uniform in total variation distance. Therefore,
when $\pi \in \calS_k$ is not the correct permutation we have
$$
\Expx{y \sim \calDIzo} { g(y) = f_{\core}(y_{B_{\pi}})  }
  \leq 1 - 0.1 + 0.01 + 2\tfrac{4k^2}{s} = 0.93 \ ,
$$
where $B_{\pi}$ are defined as in the algorithm. For the
correct permutation however, this expectation would be at least
$0.97$. Since our queries are independent, we can
apply the Chernoff bound. When we perform $r$ such queries, we
deviate from the expectation by at least 0.02 with probability at
most $\exp(-2\cdot 0.02^2 \cdot r) = \exp(-2\lceil k \log k
\rceil)$. By the union bound, our estimation for all $k!$
permutations is within 0.02 from the expectation with
probability at least $9/10$. Combining this with the probability
that the partitions are good, it follows that we choose the correct
permutation $\pi$ with probability at least $4/5$.

The failure probability of our algorithm can be bounded by the
probability that one of the above events does not occur. Therefore,
the algorithm fails with probability at most $1/30 + 1/10 + 4/5
= 1/3$, meaning it returns the correct answer with probability
at least $2/3$, as required.
\end{proof}

\begin{remark}
The assertion of the theorem holds also when the core of $f$ is isomorphic to
itself for some non-trivial permutation. The crucial requirement is
that it is not $\eps$-close to any of its permutations for
$0<\eps<0.1$. For simplicity, the proof does not consider these
cases, however this only influences the identification of the permutation $\pi$ which
is indifferent to which of the isomorphisms it corresponds.
\end{remark}

\section{Correcting partially symmetric functions}
\label{sec:psfs}

The algorithm for local correction of partially symmetric functions is similar
to the one just presented for locally correcting juntas.
The main tool for generalizing the algorithm and its proof is the
analogous measure to influence called \emph{symmetric influence},
introduced in \cite{BWY}. We repeat the definition of
symmetric influence and some of its properties in
Section~\ref{sec:psfs-tools}, where we also prove several properties of typical
partially symmetric functions.

In Section~\ref{sec:psfs-find-asym-sets} we describe the algorithms
for estimating the symmetric influence of a set and identifying all
the asymmetric sets in a partition (namely, sets which have large
symmetric influence). This is the first step in the correcting
algorithm. The second step, which is again recovering the ordering
of the identified sets, is described as part of the algorithm in
Section~\ref{sec:psfs-alg}, followed by the proof of
Theorem~\ref{thm:psfs}.

\subsection{Properties of partially symmetric functions}\label{sec:psfs-tools}

Unlike juntas, partially symmetric functions typically depend on all
the variables of the input. However, there is a large set of
variables which influence the function only according to their
combined Hamming weight, and not according to the value of each coordinate.
The concise representation of partially symmetric functions is
therefore different from that of juntas.

Let $f: \ZZ_2^n \to \ZZ_2$ be an $(n-k)$-symmetric function.
We define the core
of $f$ by $f_{\core}: \ZZ_2^k \times \{0,1, \ldots, n-k\} \to
\ZZ_2$, the function $f$ restricted to its $k$ asymmetric variables
and the Hamming weight of the remaining variables.

The notion of influence represents how much a variable, or a set of
variables, can influence the output of the function when their value
is modified. For partially symmetric functions however, as typically
all variables influence the output of the function, this is less
useful. Instead, as the \emph{special} variables are in fact the
asymmetric variables of the function, we find the following
definition of \emph{symmetric influence} more useful. We measure for
a set of variables, what is the probability that reordering them
would result in a change of the output of the function.

\begin{definition}[Symmetric influence, \cite{BWY}]
Given a Boolean function $f: \ZZ_2^n \to \ZZ_2$, the \emph{symmetric
influence} of a set of variables $J \subseteq [n]$ with respect to
$f$ is defined by
$$
\SymInf_f(J) = \Pr_{x \in \ZZ_2^n,\pi \in \calS_J} \left[ f(x) \neq
f(\pi x) \right] \ ,
$$
where $\calS_J$ is the set of permutations within $\calS_n$ which only
move elements inside the set $J$.
\end{definition}

Similar to juntas, a function $f$ is $(n-k)$-symmetric if and only if there
exists a set $K$ of size at most $k$ such that $\SymInf_f([n]
\setminus K) = 0$. The authors of \cite{BWY} defined
symmetric influence and showed it satisfies several properties which
are similar to those of influence. One of these properties is
monotonicity. For any two sets $J \subseteq K$ and any function $f$,
$\SymInf_f(J) \leq \SymInf_f(K)$. As in the case of juntas, our
algorithm heavily relies on this fact in order to identify the
asymmetric sets.

Like Theorem~\ref{thm:juntas}, Theorem~\ref{thm:psfs} is also applicable only to most partially
symmetric functions and not to all of them. We again define two
properties which are required for our algorithm to succeed with high probability, which are
typical to such functions. The first bounds the symmetric influence
of sets containing at least one asymmetric and one symmetric variable,
presented in Proposition~\ref{prop:psfs-high-syminf}.
The second, described in Proposition~\ref{prop:psfs-no-iso}, deals with
the distance between such a function and its non-trivial isomorphisms.

\begin{prop}\label{prop:psfs-high-syminf}
Let $f: \ZZ_2^n \to \ZZ_2$ be a random $(n-k)$-symmetric function for some $k < n$.
Then with probability at least $1-2^{-\Omega(\sqrt{n})}$, any asymmetric variable $i$
and any symmetric variables $j$ have symmetric influence
$\SymInf_f(\{i,j\}) > 0.1$.
\end{prop}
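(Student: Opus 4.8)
The plan is to express the symmetric influence explicitly in terms of the random core. Since $\calS_{\{i,j\}}$ consists only of the identity and the transposition $(i\,j)$, we have $\SymInf_f(\{i,j\}) = \tfrac12\Pr_x[f(x)\neq f(x^{(i\,j)})]$, and this event is nonempty only when $x_i\neq x_j$. Write $b$ for the values of the asymmetric variables other than $i$, and $w'$ for the Hamming weight of the symmetric variables other than $j$. A short computation shows that, conditioned on $x_i\neq x_j$, applying the swap changes the core input from $(b,0,w'+1)$ to $(b,1,w')$ (or vice versa), so
$$\SymInf_f(\{i,j\}) = \tfrac14\, Q_{ij}, \qquad Q_{ij} := \Pr_{b,\,w'}\bigl[f_{\core}(b,0,w'+1)\neq f_{\core}(b,1,w')\bigr],$$
where $b$ is uniform in $\ZZ_2^{k-1}$ and $w'\sim B(n-k-1,\tfrac12)$, independently; conditioning on $x_i\neq x_j$ does not affect this joint distribution. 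Thus $\SymInf_f(\{i,j\})>0.1$ is equivalent to $Q_{ij}>0.4$, and it suffices to show this holds for all relevant pairs with probability $1-2^{-\Omega(\sqrt n)}$.

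Next I would exploit the randomness of the core (a random $(n-k)$-symmetric function is exactly a random core, i.e.\ independent uniform bits on $\ZZ_2^k\times\{0,\ldots,n-k\}$). As $b$ and $w'$ range over their values, the $2^{k-1}$ asymmetric prefixes differ, the coordinate distinguishing $(b,0,\cdot)$ from $(b,1,\cdot)$ is $0\neq1$, and the weights $w'+1$ (resp.\ $w'$) are pairwise distinct; hence all $(n-k)2^k$ core values appearing above are evaluated at pairwise distinct points. Consequently the indicators $Z_{b,w'}:=\mathbf 1[f_{\core}(b,0,w'+1)\neq f_{\core}(b,1,w')]$ are mutually independent $\mathrm{Bernoulli}(\tfrac12)$ variables, and $Q_{ij}=\sum_{b,w'}\mu_{b,w'}Z_{b,w'}$ is a weighted average with $\Exp{Q_{ij}}=\tfrac12$ and weights $\mu_{b,w'}=2^{-(k-1)}\binom{n-k-1}{w'}2^{-(n-k-1)}$. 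Hoeffding's inequality for weighted sums then gives $\Pr[Q_{ij}\le0.4]\le\exp\!\bigl(-0.02/\sum_{b,w'}\mu_{b,w'}^2\bigr)$, and $\sum_{b,w'}\mu_{b,w'}^2 = 2^{-(k-1)}\binom{2(n-k-1)}{n-k-1}2^{-2(n-k-1)} = O\bigl(1/(2^{k-1}\sqrt{n-k})\bigr)$ by the standard estimate for the central binomial coefficient.

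Finally I would use the elementary fact that $2^{k-1}\sqrt{n-k}=\Omega(\sqrt n)$ for every $1\le k\le n-1$: if $2^{k-1}\ge\sqrt n$ this is immediate since $\sqrt{n-k}\ge1$, and otherwise $k<1+\tfrac12\log_2 n$, so $n-k\ge n/2$ for $n$ large and $2^{k-1}\sqrt{n-k}\ge\sqrt{n/2}$. Hence for each fixed pair $(i,j)$ we get $\Pr[\SymInf_f(\{i,j\})\le0.1]\le\exp(-\Omega(\sqrt n))$, and a union bound over the at most $k(n-k)<n^2$ such pairs finishes the proof, the factor $n^2$ being absorbed since $\log n=o(\sqrt n)$.

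I expect the main obstacle to be the regime where $k$ is large (so $n-k$ is small): there the averaging over the binomial weight $w'$ alone is far too weak to survive the union bound, and one must keep the averaging over the $2^{k-1}$ choices of $b$ and exploit that the combined quantity $2^{k-1}\sqrt{n-k}$ is always $\Omega(\sqrt n)$. Verifying the independence of all the $Z_{b,w'}$ (i.e.\ the pairwise distinctness of the core points involved) is the other place where care is needed, but it is purely bookkeeping.
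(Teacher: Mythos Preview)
Your proof is correct and follows essentially the same route as the paper: both express $\SymInf_f(\{i,j\})$ as $\tfrac14$ times a weighted average of independent $\mathrm{Bernoulli}(\tfrac12)$ indicators over $(b,w')\in\ZZ_2^{k-1}\times\{0,\ldots,n-k-1\}$, apply Hoeffding/Azuma with $\sum\mu_{b,w'}^2=2^{-(k-1)}\binom{2(n-k-1)}{n-k-1}2^{-2(n-k-1)}\approx 1/(2^{k-1}\sqrt{\pi(n-k-1)})$, and then union-bound. The only cosmetic differences are that the paper uses the martingale form of the inequality (unnecessary, since the summands are independent) and union-bounds over just the $k$ asymmetric choices (noting that $\SymInf_f(\{i,j\})$ is the same for all symmetric $j$), whereas you union-bound over all $k(n-k)$ pairs and make the verification of $2^{k-1}\sqrt{n-k}=\Omega(\sqrt n)$ explicit.
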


\begin{proof}
Let $f$ be a random $(n-k)$-symmetric function and assume without loss of generality that
its asymmetric variables are the first $k$ variables. We arbitrarily choose
the asymmetric variable $x_k$ and the symmetric variable $x_{k+1}$.
Let $x$ be a random vector and apply a random permutation on $x_k$ and $x_{k+1}$.
Notice that only with probability $1/4$ we might see two different outputs of $f$, as with
probability $3/4$ either $x_k = x_{k+1}$, or the
chosen permutation is the identity. We therefore restrict ourselves
to such inputs where $x_k \neq x_{k+1}$ and the permutation 
transposes $x_k$ and $x_{k+1}$.

We consider the partition of inputs to $f_{\core}$ into $2^{k-1} (n-k-1)$
pairs according to the value of $x_1,\cdots , x_{k-1}$ and the Hamming weight of the variables
$x_{k+2}, \ldots, x_n$. Notice that since we require $x_k$ to be different from $x_{k+1}$,
for each such restriction we have precisely two inputs to the core.
Moreover, the transposition of $x_k$ and $x_{k+1}$ only swaps
between the two inputs of the same pair, meaning the pairs are independent.
Define $m = n-k-2$ and for every $z \in \ZZ_2^{k-1}$ and $w \in \{0,1,\ldots, m\}$
let $X_{z,w}$ be the indicator random variable of the event that $f_{\core}$ agrees
on the corresponding pair of inputs. Using these definitions, we can compute
the symmetric influence of $k$ and $k+1$ as follows.
$$
\SymInf_f(\{k,k+1\}) = \frac{1}{8} + \frac{1}{8}
  \sum_{z \in \ZZ_2^{k-1}} \sum_{w=0}^{m}
  \frac{ {m \choose w} }{ 2^{k-1} 2^{m}} \cdot (-1)^{X_{z,w}} \ .
$$

To bound the deviation of the symmetric influence, we additionally define for each $z$ and $w$
the random variable $Y_{z,w} = {m \choose w} 2^{-(k-1)} 2^{-m} \cdot (-1)^{X_{z,w}}$.
The accumulated sum over $Y_{z,w}$ for every $z$ and $w$ is a martingale.
When we add a specific $Y_{z,w}$ to the sum, we modify it by ${m \choose w} 2^{-(k-1)} 2^{-m}$
in absolute value. Therefore, by the Azuma-Hoeffding inequality,
\begin{eqnarray*}
 \Pr[ \SymInf_f(\{k, k+1\}) < 0.1 ]
  & \leq &  \Pr[ | 8\cdot \SymInf_f(\{k, k+1\}) - 1 | > \tfrac{1}{5} ] \\
  & \leq & 2\exp\left(-\frac{ \tfrac{1}{5^2}  }{ 2 \sum_{z,w} |Y_{z,w}|^2  }\right) \\
  & = & 2\exp\left(-\frac{ 2^{2(k-1)}2^{2m}  }{ 2 \cdot 25 \cdot 2^{k-1} \sum_{w} {m \choose w}^2  }\right) \\
  & = & 2\exp\left(-\frac{ 2^{k}  }{ 100 } \cdot \frac{ 2^{2m} } {{2m \choose m}  }\right) \\
  & \approx & 2\exp\left(-\frac{ 2^{k}  }{ 100 } \cdot \sqrt{\pi m} \right) = 2^{-\Omega(2^k \sqrt{n-k})}\ ,
\end{eqnarray*}
where we used the known fact
$\sum_{i=0}^{m} {m \choose i}^2 = {2m \choose m} \approx \tfrac{2^{2m}}{\sqrt{\pi m}}$.

In order to complete the proof, we apply the union bound over the $k$ possible choices for the
asymmetric variable. Notice that for the symmetric variables it suffices to consider a single choice,
as they are symmetric. Thus the probability that any such symmetric influence would be smaller than
$0.1$ is bounded by $k \cdot 2^{-\Omega(2^k \sqrt{n-k})} = 2^{-\Omega(\sqrt{n})}$ as required.
\end{proof}

\begin{prop}\label{prop:psfs-no-iso}
Let $f: \ZZ_2^n \to \ZZ_2$ be a random $(n-k)$-symmetric function for some $k<n$,
and let $J$ be the set of its asymmetric variables.
Then with probability at least $1-2^{-\Omega(\sqrt{n})}$, $f$ is $0.1$-far from any non-trivial
permutation of itself which only moves elements within $J$.
\end{prop}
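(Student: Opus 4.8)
The plan is to follow the structure of the proof of Proposition~\ref{prop:juntas-no-iso}, but to run the concentration argument at the level of the core and to use the Hamming weight of the symmetric variables as an extra source of averaging, exactly as in the proof of Proposition~\ref{prop:psfs-high-syminf}. Fix a non-trivial permutation that only moves coordinates inside $J$; identifying $J$ with $[k]$, this is a non-trivial $\pi \in \calS_k$ acting on the $k$ asymmetric variables. Since $\pi$ leaves the Hamming weight of the symmetric variables unchanged, applying it only reorders the asymmetric coordinates, so
$$
\Pr_x\bigl[f(x) \neq f_\pi(x)\bigr] \;=\; \Pr_{a,\,w}\bigl[f_{\core}(a,w) \neq f_{\core}(\pi a,w)\bigr],
$$
where $a\in\ZZ_2^k$ is uniform, $w$ is an independent $\mathrm{Bin}(n-k,1/2)$ weight, and the core values $\{f_{\core}(a,w)\}$ are $2^k(n-k+1)$ independent uniform bits.

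First I would bound the expectation of this quantity over the random core. Conditioned on $a\neq\pi a$ the bits $f_{\core}(a,w)$ and $f_{\core}(\pi a,w)$ are distinct and uniform, hence disagree with probability $1/2$; and since a non-trivial $\pi\in\calS_k$ has at most $k-1$ cycles we have $\Pr_a[a=\pi a]\le 1/2$. Thus the displayed probability has expectation at least $1/4$, uniformly over all non-trivial $\pi$.

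Next I would obtain concentration through the bounded-differences (Azuma--Hoeffding) inequality, just as in Proposition~\ref{prop:psfs-high-syminf}. Write $Q_\pi$ for the above quantity, regarded as a function of the independent bits $f_{\core}(a,w)$; resampling a single value $f_{\core}(a_0,w_0)$ affects only the two comparisons indexed by $a_0$ and $\pi^{-1}a_0$ at weight $w_0$, and so changes $Q_\pi$ by at most $2\binom{n-k}{w_0}2^{-(n-k)}\,2^{-k}$. Summing the squares of these increments over all $a_0,w_0$ gives $\frac{4}{2^k}\sum_{w}\binom{n-k}{w}^2 2^{-2(n-k)} = \frac{4}{2^k}\binom{2(n-k)}{n-k}2^{-2(n-k)} \approx \frac{4}{2^k\sqrt{\pi(n-k)}}$, using $\sum_i\binom{m}{i}^2=\binom{2m}{m}\approx 2^{2m}/\sqrt{\pi m}$ with $m=n-k$. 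McDiarmid's inequality with deviation $t=\tfrac14-\tfrac1{10}$ then gives $\Pr[Q_\pi<0.1]\le\exp(-\Omega(2^k\sqrt{n-k}))=2^{-\Omega(2^k\sqrt{n-k})}$. Taking a union bound over the $k!-1$ non-trivial $\pi$ costs only a $2^{O(k\log k)}$ factor, which is negligible against the exponent, so $f$ is simultaneously $0.1$-far from all of them with probability at least $1-2^{-\Omega(2^k\sqrt{n-k})}$; since $2^k\sqrt{n-k}=\Omega(\sqrt n)$ for every $1\le k<n$, this is $1-2^{-\Omega(\sqrt n)}$, as required.

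I expect the main obstacle to be not any individual computation but pinning down the correct dependence on $k$. A cruder argument in the spirit of Proposition~\ref{prop:juntas-no-iso}, testing disagreement at a single (say central) Hamming weight, breaks down for small $k$ --- for instance when $k=2$ the disagreement at a fixed weight is a single fair coin --- so one genuinely has to average over the $\Theta(\sqrt n)$ central weight levels to get any concentration; on the other hand, to survive the union bound over all $k!$ permutations when $k$ is large one must carry out the bounded-difference estimate value-by-value so that the factor $2^k$ appears in the exponent. It is this simultaneous demand --- exploiting the $\Theta(\sqrt n)$ weight levels for small $k$ and the $2^k$ choices of core value for large $k$ --- that the argument above is designed to meet.
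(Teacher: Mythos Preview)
Your argument is correct, and it is in fact cleaner than the one in the paper. The paper proves the proposition via a case split: for $k\ge\sqrt n$ it applies Proposition~\ref{prop:juntas-no-iso} separately to each of the $n-k+1$ weight-layer juntas and union-bounds over layers, while for $k<\sqrt n$ it runs a two-stage argument --- first an Azuma bound over the weight $w$ for each fixed pair $(x,\pi x)$ among the $2^k/4$ independent pairs, and then a counting argument over those pairs --- before union-bounding over the $k!$ permutations. You collapse both regimes into a single application of McDiarmid's inequality over all $2^k(n-k+1)$ core bits, obtaining the tail bound $\exp(-\Omega(2^k\sqrt{n-k}))$ directly; the case distinction the paper makes is then absorbed into the elementary observation that $2^k\sqrt{n-k}=\Omega(\sqrt n)$ for all $1\le k<n$. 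What the paper's approach buys is modularity (the large-$k$ case is a black-box reuse of the junta result, and in fact yields the stronger conclusion that every individual weight layer is $0.1$-far from its isomorphisms); what your approach buys is a single uniform estimate with no case analysis and a transparent reason --- visible in the sum $\sum_w\binom{n-k}{w}^2$ --- why both the $2^k$ and the $\sqrt{n-k}$ factors must appear in the exponent.
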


\begin{proof}
Let $f$ be a random $(n-k)$-partially symmetric function and assume
without loss of generality that its asymmetric variables are the first $k$ variables.
Our function $f$ is in fact a union of $n-k+1$ randomly chosen $k$-juntas
over the first $k$ variables, where the Hamming weight of the remaining varialbes
determines which junta we are invoking.
When $k \geq \sqrt{n}$, we can apply Proposition~\ref{prop:juntas-no-iso} over each of
these $k$-juntas. Each such function is $0.1$-far from being isomorphic to itself with probability
at least $1-2^{-\Omega(k)}$, and we apply the union bound over them.
Therefore, $f$ would be $0.1$-far from any non-trivial isomorphism of itself,
which only moves elements within the first $k$ variables,
with probability at least $1-(n-k+1)\cdot 2^{-\Omega(k)}=1-2^{-\Omega(\sqrt{n})}$.

Assume now that $k < \sqrt{n}$, and moreover that $k \geq 2$
(as when $k=0$ and $k=1$ the proposition trivially holds).
Let $f_0, \ldots, f_{n-k+1}$ denote the $k$-juntas representing $f$.
We fix some permutation $\pi \in \calS_k$ over the first $k$ variables.
As in the proof of Proposition~\ref{prop:juntas-no-iso},
for every fixed $w$ there are at least $2^k/4$ pairs of inputs $x, \pi x \in \ZZ_2^k$ for which
the values of the events $f_w(x) \neq f_w(\pi x)$ are independent.
By applying the Azuma-Hoeffding inequality as before for such an input $x$,
we have
$$
\Pr[ \Expx{w \sim B(n-k, 1/2)}{f_w(x) \neq f_w(\pi x)} < 0.45 ] < 2^{-\Omega(\sqrt{n-k})} = 2^{-\Omega(\sqrt{n})}\ .
$$
The probability of these events is very small, and therefore over our independent $2^k/4$ samples
we would have no more than $2^k/40$ of them occur, with probability at least $1-2^{-\Omega(\sqrt{n} \cdot 2^k)}$.
So overall for a given permutation $\pi$, the distance between $f$ and $f_{\pi}$ is at least
$\tfrac{1}{4} \cdot \tfrac{9}{10} \cdot 0.45 > 0.1$, with this probability.

To complete the proof, we apply the union bound over the $k!$ permutations,
and conclude that with probability at least $1-k! \cdot 2^{-\Omega(\sqrt{n} \cdot 2^k)} > 1-2^{-\Omega(\sqrt{n})}$
all distances would be at least $0.1$ as required.
\end{proof}

\subsection{Finding the asymmetric sets}\label{sec:psfs-find-asym-sets}

The asymmetric sets in the partition are those with non-zero
symmetric influence. The following algorithm estimates the symmetric
influence of a given set efficiently.

\begin{algorithm}[tbh]
  \caption{\textsc{Estimate-Symmetric-Influence}$(f, J, \delta, \eta)$}
  \begin{algorithmic}[1] \label{alg:est-syminf}
   \STATE Set $q = \lceil \tfrac {\ln 2/\eta}{2\delta^2} \rceil$ and $X = 0$.
   \FOR{$i=1$ to $q$}
    \STATE Pick a random input $x \in \ZZ_2^n$ and a random permutation $\pi \in \calS_J$.
    \STATE Increase $X$ by 1 if $f(x) \neq f(\pi x)$.
   \ENDFOR
   \STATE Return $X/q$.
  \end{algorithmic}
\end{algorithm}

\begin{prop}
For any function $f$, a set of variables $J$ and two constants
$\delta, \eta \in (0,1)$, the algorithm
\textsc{Estimate-Symmetric-Influence}$(f, J, \delta, \eta)$
returns a value within distance $\delta$ of $\SymInf_f(J)$ with
probability at least $1-\eta$, by performing $O(\delta^{-2} \log
1/\eta)$ non-adaptive queries to $f$.
\end{prop}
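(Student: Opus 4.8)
The proposition to prove is essentially identical in form to the earlier proposition about \textsc{Estimate-Influence}, so the plan is to mirror that proof. I observe that the algorithm \textsc{Estimate-Symmetric-Influence} draws $q = \lceil \ln(2/\eta) / (2\delta^2) \rceil$ independent samples, where in the $i$th iteration it picks a uniformly random $x \in \ZZ_2^n$ and a uniformly random $\pi \in \calS_J$, and increments a counter $X$ whenever $f(x) \neq f(\pi x)$. The returned value is $X/q$.

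First I would observe that each indicator $\mathbf{1}[f(x) \neq f(\pi x)]$ is a Bernoulli random variable whose expectation is exactly $\Pr_{x, \pi}[f(x) \neq f(\pi x)] = \SymInf_f(J)$, directly by the definition of symmetric influence. Since the $q$ pairs $(x, \pi)$ are drawn independently, $X$ is a sum of $q$ i.i.d.\ Bernoulli$(\SymInf_f(J))$ variables, so $\Exp{X/q} = \SymInf_f(J)$. Next I would apply the Chernoff/Hoeffding bound to conclude that $\Pr[\,|X/q - \SymInf_f(J)| > \delta\,] \leq 2\exp(-2\delta^2 q)$. Substituting the chosen value of $q$ gives $2\exp(-2\delta^2 q) \leq 2\exp(-\ln(2/\eta)) = \eta$, which is the claimed failure probability.

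For the query complexity, I would note that iteration $i$ makes exactly two queries to $f$, namely at $x$ and at $\pi x$, so the total is $2q = O(\delta^{-2} \log(1/\eta))$ queries. Crucially, all the randomness — the points $x$ and the permutations $\pi$ — is chosen in advance, independently of any answers received, so all $2q$ queries are non-adaptive. This completes the proof. There is no real obstacle here; the only point requiring a word of care is that the sampled object in each round is the pair $(x, \pi)$ rather than a single point, but since $\SymInf_f(J)$ is defined precisely as a probability over that joint distribution, the estimator is unbiased and the standard concentration argument goes through verbatim.

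\begin{proof}
The proof is a direct application of the Chernoff bound. In each of the $q$ iterations the algorithm picks an independent uniformly random pair $(x,\pi)\in\ZZ_2^n\times\calS_J$ and increments $X$ precisely when $f(x)\neq f(\pi x)$; hence $X$ is a sum of $q$ independent Bernoulli variables each with expectation $\Pr_{x,\pi}[f(x)\neq f(\pi x)]=\SymInf_f(J)$, so $\Exp{X/q}=\SymInf_f(J)$. By the Chernoff bound, $X/q$ deviates from $\SymInf_f(J)$ by more than $\delta$ with probability at most $2\exp(-2\delta^2 q)\leq\eta$ for our choice of $q=\lceil\ln(2/\eta)/(2\delta^2)\rceil$. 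Finally, each iteration makes two queries to $f$ (at $x$ and at $\pi x$), for a total of $2q=O(\delta^{-2}\log 1/\eta)$ queries, and since all the points $x$ and permutations $\pi$ are chosen before any query answer is seen, these queries are non-adaptive.
\end{proof}
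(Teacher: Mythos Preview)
Your proof is correct and takes essentially the same approach as the paper: both invoke the Chernoff bound on the empirical mean $X/q$, using $\Exp{X/q}=\SymInf_f(J)$ and the deviation bound $2\exp(-2\delta^2 q)$. You simply spell out the details (the i.i.d.\ structure, the substitution of $q$, the $2q$ query count, and non-adaptivity) that the paper leaves implicit in its one-line proof.
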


\begin{proof}
The proof is a direct application of the Chernoff bound as
$\Exp{X/q} = \SymInf_f(J)$ and we deviate by more than
$\delta$ with probability at most $2\exp(-2\delta^2q)$.
\end{proof}

Since our goal is to identify the asymmetric sets in the presence of
noise, we rely on the following proposition. Similar to influence,
if two functions are $\eps$-close, the symmetric influence of every
set deviates by at most $2\eps$. We omit the proof as it is identical
to that of Proposition~\ref{prop:inf-distance}.

\begin{prop}\label{prop:syminf-distance}
Let $f$ and $g$ be any two functions which are $\eps$-close. Then
for every set $J \subseteq [n]$ of variables, $| \SymInf_f(J) -
\SymInf_g(J) | \leq 2\eps$.
\end{prop}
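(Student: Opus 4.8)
The plan is to mimic the proof of Proposition~\ref{prop:inf-distance} essentially verbatim, replacing the operation "resample the coordinates in $J$ to a fresh uniform value" by the operation "apply a uniformly random permutation $\pi \in \calS_J$". As there, it suffices to establish the one-sided bound $\SymInf_f(J) \le \SymInf_g(J) + 2\eps$, since the reverse inequality then follows by exchanging the roles of $f$ and $g$, and the two together give the claimed $|\SymInf_f(J)-\SymInf_g(J)| \le 2\eps$.

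First I would fix $J \subseteq [n]$ and apply the triangle inequality pointwise: for every $x \in \ZZ_2^n$ and every $\pi \in \calS_J$, if $f(x) \neq f(\pi x)$ then at least one of the three events $f(x) \neq g(x)$, $g(x) \neq g(\pi x)$, $g(\pi x) \neq f(\pi x)$ must occur. Taking probabilities over a uniform $x$ and a uniform $\pi \in \calS_J$ and using the union bound gives
$$
\SymInf_f(J) \le \Pr_{x,\pi}\left[f(x)\neq g(x)\right] + \SymInf_g(J) + \Pr_{x,\pi}\left[g(\pi x)\neq f(\pi x)\right] .
$$
The only point that needs a word of justification is that $\pi x$ is uniformly distributed over $\ZZ_2^n$ whenever $x$ is: for each fixed $\pi$ the map $x \mapsto \pi x$ is a bijection of $\ZZ_2^n$, so the marginal of $\pi x$ is uniform, and averaging over $\pi$ keeps it uniform. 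Hence $\Pr_{x,\pi}\left[g(\pi x)\neq f(\pi x)\right] = \Pr_z\left[g(z)\neq f(z)\right] \le \eps$, and trivially $\Pr_{x,\pi}\left[f(x)\neq g(x)\right] \le \eps$. Substituting these two bounds into the display yields $\SymInf_f(J) \le \SymInf_g(J) + 2\eps$, which completes the argument.

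I do not expect any real obstacle here; the "hard" step is merely the observation that a coordinate permutation is measure‑preserving on the uniform distribution, which is exactly the analogue of the fact that $x_{\overline J}y_J$ is uniform in the proof of Proposition~\ref{prop:inf-distance}. I would remark in passing that the argument uses nothing about $\calS_J$ beyond the fact that its elements permute coordinates, so the identical proof in fact bounds the deviation of $\Pr_{x,\pi}[f(x)\neq f(\pi x)]$ under any distribution over such permutations, not just the uniform one.
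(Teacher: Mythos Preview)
Your proof is correct and is exactly the approach the paper intends: the paper omits the argument entirely, saying it is identical to that of Proposition~\ref{prop:inf-distance} with symmetric influence in place of influence. Your added remark that $\pi x$ is uniform because each $\pi$ is a bijection of $\ZZ_2^n$ is precisely the one line needed to make the substitution go through.
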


The first part in correcting partially symmetric functions is
identifying the asymmetric sets. This algorithm is identical to the
one for identifying the influencing sets in juntas, with the only
difference of invoking \textsc{Estimate-Symmetric-Influence} instead
of \textsc{Estimate-Influence}. We require our input function to
satisfy the condition of Proposition~\ref{prop:psfs-high-syminf} in
order to correctly identify the asymmetric sets in spite of
the noise.

The algorithm \textsc{Find-Asymmetric-Sets} is given a function $f$,
a partition of the variables $\calI$, a size parameter $k$ and a
noise parameter $\eps$. The algorithm returns all parts in the
partition $\calI$ which it considered as asymmetric with respect to
$\eps$.

\begin{algorithm}[tbh]
  \caption{\textsc{Find-Asymmetric-Sets}$(f, \calI = \{I_1, \ldots, I_s\}, k, \eps)$}
  \begin{algorithmic}[1] \label{alg:find-asym-sets}
   \STATE Fix $r = \lceil 12k \ln s \rceil $ and $S = [s]$.
   \FOR{each of $r$ rounds}
    \STATE Pick a random subset $T \subseteq [s]$ by including each index independently with probability $1/k$.
    \STATE Define $J = \cup_{i \in T} I_i$ to be the union of sets in $\calI$ according to
    the indices of $T$.
    \STATE If \textsc{Estimate-Symmetric-Influence}$(f, J, \epsilon, 1/20r) \leq 3\eps$, set $S = S \setminus T$.
   \ENDFOR
   \STATE Return $\{I_i\}_{i \in S}$
  \end{algorithmic}
\end{algorithm}

\begin{lemma}\label{lem:find-asym-sets}
Let $f$ be an $(n-k)$-symmetric function for $k=o(n/\log n)$, which
satisfies the following conditions. The asymmetric variables of $f$
are separated by the partition $\calI$, which is of size $|\calI| =
O(k^2)$, $|\calI| > 5$. Additionally, the symmetric influence of
every asymmetric variable and a symmetric variable is at least
$6\eps$ for some $\eps>0$. Then for every function $g$ which is
$\eps$-close to $f$, \textsc{Find-Asymmetric-Sets}$(g, \calI, k,
\eps)$ returns exactly the $k$ sets containing the asymmetric
variables of $f$ with probability at least $9/10$, by performing
$O(k \log^2 k/\eps^2)$ non-adaptive queries to $g$.
\end{lemma}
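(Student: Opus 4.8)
The plan is to mirror the proof of Lemma~\ref{lem:find-inf-sets} almost verbatim, replacing influence by symmetric influence throughout and substituting the two structural inputs accordingly: monotonicity of $\SymInf$, the noise-robustness estimate of Proposition~\ref{prop:syminf-distance}, and the hypothesis that every pair consisting of an asymmetric and a symmetric variable has symmetric influence at least $6\eps$ (this plays the role that "every influencing variable has influence at least $6\eps$" played before). First I would record the query complexity: \textsc{Find-Asymmetric-Sets} runs $r = \lceil 12k\ln s\rceil = O(k\log k)$ rounds (using $s = |\calI| = O(k^2)$, so $\ln s = O(\log k)$), and each round calls \textsc{Estimate-Symmetric-Influence}$(g,J,\eps,1/20r)$, which makes $O(\eps^{-2}\log r) = O(\eps^{-2}\log k)$ non-adaptive queries; multiplying gives $O(k\log^2 k/\eps^2)$ non-adaptive queries, and non-adaptivity is inherited from the sub-routine. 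I would then condition on the event $E$ that all $r$ invocations of \textsc{Estimate-Symmetric-Influence} are within $\eps$ of the true value; by the union bound over $r$ rounds this holds with probability at least $1 - r\cdot(1/20r) = 19/20$.

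Working on the event $E$, I split into the two directions exactly as before. For a set $I\in\calI$ that contains an asymmetric variable: since $\calI$ separates the asymmetric variables, $I$ contains exactly one asymmetric variable $i$, and $I$ also contains at least one symmetric variable (here I use $|I|\geq 2$ for the relevant parts, which holds because $n$ is large relative to $k$ — more precisely the asymmetric variables number $k = o(n/\log n)$ while $s = O(k^2)$, so almost every part, and in particular the $k$ asymmetric parts, contain many symmetric variables with overwhelming probability; this is exactly where the $k = o(n/\log n)$ hypothesis is consumed). Hence $\SymInf_f(I) \geq \SymInf_f(\{i,j\}) \geq 6\eps$ by monotonicity, so $\SymInf_f(J)\geq 6\eps$ for every $J\supseteq I$ by monotonicity again, and by Proposition~\ref{prop:syminf-distance} $\SymInf_g(J)\geq 4\eps$; since our estimate deviates by at most $\eps$, it is at least $3\eps$ and we never remove $I$. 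For a set $I$ containing no asymmetric variable: it suffices to exhibit, with high probability, a round whose random $T$ selects $I$ but none of the $\leq k$ asymmetric parts, since then $J = \bigcup_{i\in T}I_i$ is a union of symmetric-only parts, so $\SymInf_f(J) = 0$ (as the union of symmetric variables of an $(n-k)$-symmetric function has zero symmetric influence — this is the "if" direction of the characterization recalled in Section~\ref{sec:psfs-tools}), hence $\SymInf_g(J)\leq 2\eps$ and our estimate is at most $3\eps$, removing $I$. The probability a single round does this is at least $(1/k)(1-1/k)^k \geq 1/(4k)$ for $k\geq 2$, so the probability $I$ survives all $r$ rounds is at most $(1-1/4k)^r \leq e^{-r/4k} \leq e^{-3\ln s} = s^{-3} < 1/(20s)$; a union bound over the at most $s$ symmetric-only parts gives failure probability at most $1/20$ here.

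Combining: on $E$ (probability $\geq 19/20$), conditioned further on the structural "large parts" event (probability $1 - 2^{-\Omega(\cdot)}$, absorbed into the slack), both directions succeed except with probability at most $1/20$, so \textsc{Find-Asymmetric-Sets} returns exactly the $k$ asymmetric sets with probability at least $1 - 1/20 - 1/20 = 9/10$. The only genuinely new wrinkle compared to Lemma~\ref{lem:find-inf-sets} — and the step I expect to need the most care — is justifying that the asymmetric parts really do contain a symmetric variable, so that Proposition~\ref{prop:psfs-high-syminf}'s bound on $\SymInf_f(\{i,j\})$ is available to lower-bound $\SymInf_f(I)$; everything else is a direct transcription of the junta argument with $\Inf$ replaced by $\SymInf$ and "non-influential part" replaced by "symmetric-only part."
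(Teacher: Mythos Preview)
Your proposal is correct and matches the paper's own proof, which is in fact just a two-sentence pointer back to Lemma~\ref{lem:find-inf-sets} with $\SymInf$ replacing $\Inf$ and Proposition~\ref{prop:syminf-distance} replacing Proposition~\ref{prop:inf-distance}. The one spot where your write-up differs slightly from the paper's sketch is in how the hypothesis $k=o(n/\log n)$ is consumed: you place the needed symmetric variable inside each asymmetric part $I$ itself (arguing $|I|\geq 2$), whereas the paper phrases it as ``with high probability in all rounds the set $J$ contains at least one symmetric variable''; both readings are at the same informal level and serve the identical purpose of making $\SymInf_f(J)\geq 6\eps$ whenever $J$ contains an asymmetric part.
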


\begin{proof}
The proof of the above lemma is identical to that of
Lemma~\ref{lem:find-inf-sets}, where we replace influence with
symmetric influence when applicable.
Proposition~\ref{prop:syminf-distance} guarantees that in the noisy
function $g$ we would still be able to distinguish the asymmetric
sets. Additionally, by the limitation over $k$, with high
probability in all rounds we would have at least one symmetric
variable, which guarantees the high symmetric influence (assuming
there is also an asymmetric variable).
\end{proof}

\subsection{The algorithm}\label{sec:psfs-alg}

The second step of the algorithm is recovering the correct order of
the $k$ asymmetric sets. As explained before, it is reasonable to
query the input function over inputs which are constant over the
various parts of our random partition. In this case however, we also
care about the Hamming weight of the inputs we query. Due to this,
we cannot allow our queries to be consistent over all the parts, but
rather we will dedicate one part for adjusting the distribution of
the Hamming weight. We name this special part \emph{workspace}, and
we choose it arbitrarily from our random partition.

\begin{definition}
Let $\calI$ be some partition of $[n]$ into an odd number of parts
and let $W \in \calI$ be the workspace. Define the distribution
$\calDIW$ over $\ZZ_2^n$ to be as follows. Pick a random Hamming
weight $w$ according to the binomial distribution $B(n, 1/2)$ and
output, if exists, a random $x \in \ZZ_2^n$ of Hamming weight $|x| =
w$ such that for every $I \in \calI \setminus \{W\}$, either $x_I
\equiv 0$ or $x_I \equiv 1$. When no such $x$ exists, return the all
zeros vector.
\end{definition}

The above distribution, together with the random choice of the
partition and workspace, satisfies the following two important
properties. The first, being close to uniform over the inputs of the
function. The second, having a marginal distribution over the inputs
to the core of a partially symmetric function close to
\emph{typical}. A typical distribution over the core of an
$(n-k)$-symmetric function is the product of a uniform distribution
over $\ZZ_2^k$ and $B(n-k, 1/2)$. These properties are formally
written here as Proposition~\ref{prop:psfs-sample} which originally
appeared in~\cite{BWY}.

\begin{prop}[\cite{BWY}]\label{prop:psfs-sample}
Let  $J=\{j_1, \ldots, j_k\} \subseteq [n]$ be a set of size $k$,
and $s = \Omega(k^2)$ be odd. If $y \sim \calDIW$ for a random
partition $\calI$ of $[n]$ into $s$ parts and a random choice of the
workspace $W \in \calI$, then
\begin{itemize}
\item $y$ is $o(1/n)$-close to being uniform over $\ZZ_2^n$, and
\item $(y_J, |y_{\overline{J}}|)$ is $(k/s + o(1))$-close to being distributed uniformly over
  $\ZZ_2^k$ and binomial over $\{0, 1, \ldots, n-k\}$,
  for a fixed partition $\calI$ which separates the variables of $J$.
\end{itemize}
\end{prop}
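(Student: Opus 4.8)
The plan is to prove both items by first reducing to the event that the random partition $\calI$ is \emph{balanced}, and then analysing $\calDIW$ conditioned on this event. Call $(\calI,W)$ balanced if every one of the $s$ parts, the workspace $W$ included, has size $\Theta(n/s)$. Since each coordinate is thrown into a uniformly random part, a Chernoff bound together with a union bound over the $s$ parts shows that $(\calI,W)$ fails to be balanced with probability $s\cdot 2^{-\Omega(n/s)}$, which is $o(1/n)$ provided $s$ is sufficiently small relative to $n$. Two consequences of balancedness drive everything that follows. First, $|W|$ is at least as large as every other part, so a greedy argument shows that for \emph{every} target weight $w\in\{0,\dots,n\}$ there is an admissible vector (one constant on each non-workspace part and of weight $w$); hence the all-zeros fallback in the definition of $\calDIW$ is never triggered, and the weight $|y|$ of $y\sim\calDIW$ is distributed exactly as $B(n,1/2)$. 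Second, on a balanced partition all the counting quantities below concentrate.

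For the first item, fix $x\in\ZZ_2^n$ and let $N_{\calI,W}(w)$ be the number of admissible vectors of weight $w$ with respect to $(\calI,W)$; it depends only on the part sizes. Conditioned on a balanced partition, $\Pr_{y\sim\calDIW}[y=x]$ equals $\binom{n}{|x|}2^{-n}/N_{\calI,W}(|x|)$ when $x$ is admissible and $0$ otherwise, so
\[
\Pr[y=x] \;=\; \binom{n}{|x|}2^{-n}\cdot \Expx{\calI,W}{\frac{\mathbf 1[x\text{ admissible}]}{N_{\calI,W}(|x|)}\,\Big|\,(\calI,W)\text{ balanced}} \;\pm\; o(1/n)\cdot 2^{-n}.
\]
Because the random partition is exchangeable, $\Pr_{\calI,W}[x\text{ admissible}]$ depends only on $|x|$, and $\Exp{N_{\calI,W}(w)}$ equals $\binom{n}{w}$ times this probability. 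The core step is to show that $N_{\calI,W}(w)$ is sharply concentrated around its mean for a balanced partition; granting this, one may replace $\Expx{\calI,W}{\mathbf 1[x\text{ adm}]/N_{\calI,W}(w)}$ by $\Pr[x\text{ adm}]/\Exp{N_{\calI,W}(w)}=1/\binom{n}{w}$ up to a $(1\pm o(1/n))$ factor, whence $\Pr[y=x]=(1\pm o(1/n))2^{-n}$ for every $x$ of typical weight. Weights far from $n/2$ carry total mass $o(1/n)$ under both distributions and are absorbed into the error term, and summing $\tfrac12\sum_x|\Pr[y=x]-2^{-n}|$ yields the claimed $o(1/n)$ bound in total variation.

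For the second item, assume in addition, as holds with probability $1-o(1)$ once $s=\Omega(k^2)$, that the partition separates $J$, placing each $j_i$ in a distinct part and keeping all these $k$ parts away from the workspace. Then $y_J$ is exactly the vector of the $k$ intended bits of the parts containing $J$, and $|y_{\overline J}|=|y|-|y_J|$. Rerunning the computation above while tracking the joint law of these $k$ bits and the leftover weight, the only obstruction to $(y_J,|y_{\overline J}|)$ being exactly uniform-times-binomial is that, inside the constraint "total weight $=w$", switching a large part's bit to $1$ consumes more of the available weight budget than switching a small one; for a balanced partition each of the $k$ special parts has size $O(n/s)$ against a weight slack of order $n$, so this size-dependent bias contributes $O(k/s)$ to the total-variation distance, and the residual discrepancy, coming from the $B(n,1/2)$-versus-binomial reparametrisation and the concentration estimates of the first part, is $o(1)$. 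This gives the $(k/s+o(1))$ bound.

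The main obstacle is exactly the expectation estimate and concentration for $N_{\calI,W}(w)$ in the first item: one must control a quantity that sits in a denominator and whose typical size, of order $2^{s-1}2^{|W|}\sqrt{s}/n$ on a balanced partition, is far smaller than the naive count $2^{s-1}2^{|W|}$ because the weights realised by admissible vectors are spread over a window of width $\Theta(n/\sqrt s)\gg\sqrt n$. Establishing this carefully — rather than the routine Chernoff/Azuma bookkeeping used everywhere else — is the crux, and it is precisely where the interplay between $s=\Omega(k^2)$ (so the partition separates $J$) and $s$ being sublinear in $n$ (so the partition is balanced and the fallback is irrelevant) is needed.
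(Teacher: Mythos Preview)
The paper does not prove this proposition at all: it is quoted verbatim from~\cite{BWY} and used as a black box, so there is no in-paper argument to compare your sketch against.

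On your sketch itself: you have correctly located the difficulty, but you have not resolved it. You write ``granting this'' about the concentration of $N_{\calI,W}(w)$ and then close by calling that very estimate ``the crux'' and ``the main obstacle''; that is an honest diagnosis, but it means what you have is a plan, not a proof. Two further points. First, your claim that balancedness implies ``$|W|$ is at least as large as every other part'' is not a consequence of ``every part has size $\Theta(n/s)$''; what you actually need for the greedy feasibility argument is that the maximum non-workspace part size does not exceed $|W|$ (so the gaps in the attainable subset sums can be bridged inside $W$), and that requires a slightly sharper definition of balanced or a separate one-line argument. Second, your Chernoff step for balancedness silently assumes $s$ is small enough that $n/s\to\infty$; the proposition as stated only says $s=\Omega(k^2)$, so you are importing the ambient assumption $k=o(n/\log n)$ used elsewhere in the paper. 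That is fine in context, but you should say so.
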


\begin{remark}
Our algorithm will choose the partition $\calI$ and workspace $W$
not entirely at random, but rather it will consider the location of
zeros and ones in the input $x$. However, the above proposition
still holds in this scenario, assuming we partition each of the two
sets into $r$ parts at random.
\end{remark}

We can now describe the full algorithm, followed by its analysis.
Recall that our goal is to locally correct a given function $g$,
which is $\eps$-close to $f_{\sigma}$, by returning the value
$f_{\sigma}(x)$ for the given input $x$. Additionally, $f$ is an
$(n-k)$-symmetric function whose core is known to the algorithm, and
we can and will restrict ourselves to such functions which satisfy
typical conditions (namely Propositions~\ref{prop:psfs-high-syminf}
and \ref{prop:psfs-no-iso}). Notice that in our scenario, the
smaller $\eps$ is the easier it is to correct so it suffices to show
that for $\eps=0.001$, the algorithm succeeds with good probability
and with the requested query complexity.

\begin{algorithm}[tbh]
  \caption{\textsc{Locally-Correct-Partially-Symmetric-Function}$(f_{\core}, k, g, x)$}
  \begin{algorithmic}[1]
   \STATE Fix $s = 100k^2$ and $r = 2500 \lceil k\log k \rceil $.
   \STATE Choose a random workspace $W \subseteq [n]$ by including each
   $i \in [n]$ into $W$ with probability $1/(2s+1)$.
   \FOR{$j \in \{0,1\}$}
     \STATE Let $X_j = \{ i \in [n] \setminus W \mid x_i = j \}$.
     \STATE Randomly partition $X_j$ into $\calI_j$, consisting of (potentially) $s$ parts.
   \ENDFOR
   \STATE Define $\calI = \calI_0 \cup \calI_1 \cup \{W\}$ to be our partition.
   \STATE Invoke \textsc{Find-Asymmetric-Sets}$(g, \calI, k, 0.01)$ and assign the result into $\calJ = \{ I_{a_1}, I_{a_2}, \ldots \}.$
   \STATE If $ | \calJ | \neq k$, or if $\calJ$ contains $W$ or an empty set, return 0.
   \STATE Let $B = (b_1, \ldots, b_k)$ be an arbitrary ordered set such that $b_i \in I_{a_i}$ for every $i \in [k]$.
   \STATE For every $\ell \in [r]$, randomly sample $y^\ell \sim \calDIW$ and query $g(y^\ell)$.
   \STATE Let $\pi \in \calS_k$ be the permutation which maximizes the number of indices $\ell$
   for which \\ $g(y^\ell) = f_{\core}(y^\ell_{B_{\pi}},
   |y^\ell_{\overline{B}}|)$, where $B_{\pi} = (b_{\pi(1)}, \ldots,
   b_{\pi(k)})$.
   \STATE Return $f_{\core}(x_{B_{\pi}}, |x_{\overline{B}}|)$
  \end{algorithmic}
\end{algorithm}

\begin{proof}[Proof of Theorem~\ref{thm:psfs}]
The queries of the algorithm are performed by invoking
\textsc{Find-Asymmetric-Sets} and by sampling according to
$\calDIW$. In both cases the queries are non-adaptive and depend
only on our choice of the partition $\calI$ and workspace $W$. The
total number of queries performed is $O(k\log^2k)$, as required.

Let $f$ be an $(n-k)$-symmetric function which satisfies the conditions of
Propositions~\ref{prop:psfs-high-syminf} and \ref{prop:psfs-no-iso}.
As each condition is satisfied with probability at least
$1-2^{-\Omega(\sqrt{n})}$, it suffices to show the algorithm succeeds with
high probability when applied to these functions.
Moreover, when $k = \Omega(n/\log n)$, Theorem~\ref{thm:psfs} trivially holds
by applying the standard non-adaptive algorithm
of querying $O(n \log n)$ uniform queries (given that Proposition~\ref{prop:psfs-no-iso} is satisfied).
Therefore, we assume $k = o(n / \log n)$ throughout the rest of the proof.

The success of the algorithm depends on the following three events.
First, we need the partition $\calI$ to separate all the
asymmetric variables, and that none of them would belong to the
workspace. An asymmetric variable would be chosen to the workspace
with probability at most $k/s$. At each set of variables, $X_0$ and
$X_1$, there are at most $k$ asymmetric variables and therefore each
would have a collision with probability at most ${k \choose 2}/s$.
Therefore, we would have a bad partition and workspace with
probability at most $(k+2{k \choose 2})/s < 1/30$ for our choice of
$s$.

The second event is identifying the correct sets, those for which
$f_{\sigma}$ has asymmetric variables. Assuming the first event
occurred, by Lemma~\ref{lem:find-asym-sets} we will identify
precisely the asymmetric sets with probability at least $9/10$.

The third and last event which remains is correctly choosing the
permutation $\pi$. Here we rely on the fact that the core of $f$ is
not close to any non trivial permutation of itself. Since our
estimates are performed using queries according to $\calDIW$, we
need to estimate how accurate they are. By
proposition~\ref{prop:psfs-sample}, as each sample $y$ is
distributed $o(1/n)$-close to uniform, the distance between $g(y)$
and $f_{\sigma}(y)$ is at most $0.001 + o(1/n)$ over the choice of
$\calI$ and $W$. By applying Markov's inequality, our partition and
workspace will satisfy $\Pr_{y\sim \calDIW}[g(y) \neq f_{\sigma}(y)]
\leq 0.02$ with probability at least $9/10$.

Assume from this point on that the partition and workspace satisfy
the above inequality. By the second part of
proposition~\ref{prop:psfs-sample}, the marginal distribution of $y$
over the asymmetric variables (and in particular over the asymmetric
sets) would be $0.01$-close to uniform in total variation distance.
Therefore, when $\pi \in \calS_k$ is not the correct permutation we
have
$$
\Expx{y \sim \calDIW} { g(y) = f_{\core}(y_{B_{\pi}},
|y_{\overline{B}}|) }
  \leq 1 - 0.1 + 0.02 + 0.01 = 0.93 \ ,
$$
where $B$ and $B_{\pi}$ are defined as in the algorithm. For the
correct permutation, however, this expectation is at least
$0.97$. Since our queries are independent from one another, we can
apply the Chernoff bound. When performing $r$ such queries, we 
deviate from the expectation by at least 0.02 with probability at
most $\exp(-2\cdot 0.02^2 \cdot r) = \exp(-2\lceil k \log k
\rceil)$. By the union bound, our estimation for all $k!$
permutations is within 0.02 from the expectation with
probability at least $9/10$. Combining this with the probability
that the partition and workspace are good, the correct
permutation $\pi$ is chosen with probability at least $4/5$.

The failure probability of our algorithm can be bounded by the
probability that one of the above events will not occur. Therefore,
the algorithm will fail with probability at most $1/30 + 1/10 + 4/5
= 1/3$, meaning it will return the correct answer with probability
at least $2/3$, as required.
\end{proof}

\begin{remark}
As in Theorem~\ref{thm:juntas}, this proof also holds for partially
symmetric functions which do have some non-trivial isomorphisms, but
for simplicity we do not consider this here.
\end{remark}

\section{Conclusions and open problems}
\label{sec:conclusions} 

In the previous sections we have shown that most juntas and partially symmetric
functions can be efficiently locally corrected from a constant error rate.
Although Proposition~\ref{prop:impossible-juntas-psfs} indicates that not every junta
or partially symmetric function satisfy this, it provides no insight on whether other
functions can be locally corrected efficiently under similar conditions.

To better understand the question of which functions can be locally corrected efficiently from
a constant noise rate, we present the proof of Theorem~\ref{thm:most-functions}.
This theorem does not provide a characterization of these locally correctable functions,
but rather indicates that most functions do not fall into this category. In order to prove the theorem, we
use the main results of~\cite{AB}, where it is shown that testing isomorphism to
almost every function requires at least a linear number of non-adaptive queries.

\begin{prop}[\cite{AB}]
For almost every Boolean function $f$ the following holds.
Let $Q = Q_b \cup Q_u$ be a set of $\tfrac{n}{100}$ balanced queries
(of Hamming weight between $\tfrac{n}{3}$ and $\tfrac{2n}{3}$) and $\tfrac{n}{100}$ unbalanced queries.
Over the choice of a random isomorphism $f_{\sigma}$ of $f$, with probability $1-o(1)$
over the output of $f_{\sigma}$ at $Q_u$, every possible outcome $r \in \ZZ_2^{|Q_b|}$
of querying $f_{\sigma}$ at $Q_b$ is obtained with probability $(1 \pm \tfrac{1}{3})2^{-|Q_b|}$.
\end{prop}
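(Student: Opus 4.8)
The plan is to recast the statement as a counting estimate and then run a first/second‑moment argument over the randomness of $f$, the whole thing turning on the fact that a \emph{balanced} query lies in an $\calS_n$‑orbit of size $2^{\Omega(n)}$ while an unbalanced one need not. Fix the query set $Q=Q_b\cup Q_u$ (it is chosen before $f$ in the application). For $\sigma\in\calS_n$ write $\pi_\sigma$ for the induced permutation of $\ZZ_2^n$, so $f_\sigma(q)=f(\pi_\sigma(q))$ and $|\pi_\sigma(q)|=|q|$, and for $r\in\ZZ_2^{|Q_b|}$, $u\in\ZZ_2^{|Q_u|}$ set
\[
N_f(r,u)=\#\{\sigma:\ f(\pi_\sigma(q))=r_q\ \forall q\in Q_b\ \text{ and }\ f(\pi_\sigma(q))=u_q\ \forall q\in Q_u\},\qquad M_f(u)=\sum_{r'}N_f(r',u).
\]
Then $f_\sigma$ outputs $u$ on $Q_u$ with probability $M_f(u)/n!$, and conditioned on this it outputs $r$ on $Q_b$ with probability $N_f(r,u)/M_f(u)$. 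Since $(1\pm\tfrac17)\subseteq(1\pm\tfrac13)$, it suffices to show that for almost every $f$ the total $M_f(u)/n!$‑mass of those $u$ for which some $r$ violates $N_f(r,u)=(1\pm\tfrac17)2^{-|Q_b|}M_f(u)$ is $o(1)$.

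The first step is to split $f$ as $(f_{\mathrm{ub}},f_{\mathrm b})$, the restrictions of $f$ to points of Hamming weight outside, resp.\ inside, $[n/3,2n/3]$; these are independent uniform objects. Every $Q_u$‑query lands on an unbalanced point, so $M_f(u)$ and the set $S_u=\{\sigma:f(\pi_\sigma(q))=u_q\ \forall q\in Q_u\}$ depend only on $f_{\mathrm{ub}}$. I would fix $f_{\mathrm{ub}}$ and average over $f_{\mathrm b}$ alone. For each $\sigma\in S_u$ the points $\{\pi_\sigma(q):q\in Q_b\}$ are distinct, all balanced, and disjoint from the $Q_u$‑images, so $f_{\mathrm b}$ is i.i.d.\ uniform on them; hence $\Expx{f_{\mathrm b}}{N_f(r,u)}=2^{-|Q_b|}M_f(u)$ exactly. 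Writing $N_f(r,u)=\sum_{\sigma\in S_u}Z_\sigma$ with $Z_\sigma$ the corresponding $0/1$ indicator, $\mathrm{Cov}(Z_\sigma,Z_\tau)$ vanishes unless the balanced point‑sets $\{\pi_\sigma(q)\}_{q\in Q_b}$ and $\{\pi_\tau(q)\}_{q\in Q_b}$ meet, and then $\mathrm{Cov}(Z_\sigma,Z_\tau)\le 2^{-|Q_b|}$. A balanced weight $w$ has $\binom nw\ge\binom n{\lfloor n/3\rfloor}\ge 2^{0.9n}$, so for a fixed $\sigma$ the number of $\tau$ whose balanced point‑set meets that of $\sigma$ is at most $\sum_{q,q'\in Q_b}\#\{\tau:\pi_\tau(q')=\pi_\sigma(q)\}\le |Q_b|^2 n!\,2^{-0.9n}$ (a nonzero term needs $|q|=|q'|$, and then the count is $|q'|!(n-|q'|)!=n!/\binom n{|q'|}$). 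Hence $\mathrm{Var}_{f_{\mathrm b}}(N_f(r,u))\le 2^{-|Q_b|}M_f(u)\,|Q_b|^2 n!\,2^{-0.9n}$.

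Chebyshev then bounds $\Prob{\,|N_f(r,u)-2^{-|Q_b|}M_f(u)|>\tfrac17 2^{-|Q_b|}M_f(u)\,}$ (over $f_{\mathrm b}$) by $O(2^{|Q_b|}|Q_b|^2 n!\,2^{-0.9n}/M_f(u))$. Summing over the $2^{|Q_b|}$ choices of $r$ and over all $u$ weighted by $M_f(u)/n!$, the factors $M_f(u)$ cancel and the expected bad mass is $O((n/100)^2\,2^{2|Q_b|+|Q_u|}\,2^{-0.9n})=O((n/100)^2\,2^{-0.87n})=o(1)$, using $|Q_b|=|Q_u|=n/100$. Applying Markov over $f_{\mathrm b}$ (with threshold, say, $1/\log n$) and then averaging over $f_{\mathrm{ub}}$, for all but an $o(1)$‑fraction of $f$ the bad mass is $o(1)$ — which is the assertion, with $(1\pm\tfrac17)$ in place of $(1\pm\tfrac13)$.

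The one genuinely non‑routine point — and the reason one must condition on the $Q_u$‑output rather than try to prove the whole transcript near‑uniform — is exactly this ratio formulation. The count $M_f(u)$ really does not concentrate: unbalanced queries of small weight, the extreme being weight $1$, where $f_\sigma$ on $Q_u$ merely reveals a random sub‑multiset of $\{f(e_1),\dots,f(e_n)\}$, make some outcomes $u$ exponentially more likely than others and leave many $u$ with $M_f(u)=0$. Passing to the conditional probability $N_f(r,u)/M_f(u)$ is precisely what lets $M_f(u)$ cancel against the sampling weight $M_f(u)/n!$, so that only the balanced queries need to be controlled — and those, living in orbits of size $2^{\Omega(n)}$, behave like fresh independent coin flips once the $Q_u$‑bits are fixed. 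Everything else (the variance count, the two Markov/Chebyshev steps, the constants) is bookkeeping.
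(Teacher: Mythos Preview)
The paper does not supply its own proof of this proposition; it is quoted from \cite{AB} and used as a black box in the proof of Theorem~\ref{thm:most-functions}. So there is no in-paper argument to compare against.

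Your argument is correct and is the natural second-moment computation one expects here. Splitting $f=(f_{\mathrm{ub}},f_{\mathrm b})$, freezing $f_{\mathrm{ub}}$ so that $M_f(u)$ and $S_u$ become deterministic, and then controlling $N_f(r,u)=\sum_{\sigma\in S_u}Z_\sigma$ via Chebyshev is exactly right: the expectation $\Expx{f_{\mathrm b}}{N_f(r,u)}=2^{-|Q_b|}M_f(u)$ is immediate, and the covariance bound via the orbit-size estimate $\binom{n}{w}\ge 2^{0.9n}$ for balanced $w$ checks out (as does the count $|q'|!(n-|q'|)!=n!/\binom{n}{|q'|}$ of permutations sending one fixed balanced point to another). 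The final weighting by $M_f(u)/n!$, which makes the unknown $M_f(u)$ cancel, is the point you correctly highlight in your closing paragraph: the $Q_u$-transcript itself is \emph{not} near-uniform (tiny orbits at extreme weights), and the conditional formulation is precisely what sidesteps this.

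One remark on quantifier order. Your proof establishes, for each \emph{fixed} query set $Q$, that all but a $2^{-\Omega(n)}$ fraction of $f$ are good. The proposition as phrased here (and as used in the proof of Theorem~\ref{thm:most-functions}) reads with the quantifiers reversed --- a single exceptional set of $f$ working for \emph{all} $Q$ --- and your bound does not survive a union over the $2^{\Theta(n^2)}$ possible query sets. This is not a defect in your argument: the fixed-$Q$ version is what \cite{AB} actually proves, and it is all one needs for the lower bound once Yao's principle is invoked to pass to deterministic algorithms. But it is worth being aware that the paper's phrasing is slightly looser on this point than your write-up.
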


\begin{proof}[Proof of Theorem~\ref{thm:most-functions}]
The above proposition is essentially everything needed for completing the proof.
Given a random Boolean function which satisfies the condition of the above proposition,
we can request one to locally correct some arbitrary balanced input.
Let us assume there exists an algorithm that can perform this task with good probability,
using less than $\tfrac{n}{100}$ non-adaptive queries. We can extend its query set to consist of
$\tfrac{n}{100}$ balanced and $\tfrac{n}{100}$ unbalanced queries, including the input we asked to correct
(obviously this cannot reduce its success probability).

By the above proposition, with high probability over the
output of the function over the unbalanced queries, the distribution of the function over the balanced queries
is very close to uniform. In fact, for every possible output over all the balanced queries except the
questioned input, the marginal distribution that remains is $1/3$-close to uniform,
and hence one would not be able
to predict its value with probability $\geq 2/3$. Notice that for the purpose of the analysis, we assume
no noise was used, where in practice we simply override the value of the questioned input with
the value 0.
\end{proof}

Like juntas and partially symmetric functions, there are several other families it would be interesting
to investigate with respect to local correction. One natural family is that of low degree polynomials.
It was already shown (see \cite{AW}, \cite{AKKLR}) that these can be
locally corrected using an exponential number of queries,
when the noise rate is also exponentially small (in term of the degree of the polynomial).
However, whether this can be done when the noise rate is constant is yet unknown.

\begin{question*}
Fix some constant $\eps > 0$ and a degree $d > \lceil \log_2 1/\eps \rceil $.
Do most degree $d$ polynomials over $n \geq d$ variables require $\Omega(\log n)$ queries to be
locally corrected from $\eps$ noise rate?
\end{question*}

Low degree polynomials have a rather rigid structure, although they may still be far from symmetric
with respect to any subset of the input variables (even for very small degrees).
When the noise rate is larger than $2^{-d}$ (where $d$ is the degree of the polynomial), the noisy
function can actually be another polynomial of the same degree. In such a case, making use of the structure of the polynomial
seems trickier, as one cannot simply correct according to it. Given the natural asymmetric nature of most
low degree polynomials, it may be the case that they are similar to random functions, in the sense
that one needs a number of queries which grows with $n$ to perform this task, even for constant degrees.


\begin{thebibliography}{99}

\bibitem{AB}
N. Alon and E. Blais, {\em Testing boolean function isomorphism}.
In Proc. RANDOM-APPROX, pp. 394-405, 2010.

\bibitem{AKKLR}
N. Alon, T. Kaufman, M. Krivelevich, S. Litsyn and D. Ron,
{\em Testing low-degree polynomials over GF(2)}.
In Proc.  RANDOM-APPROX, pp. 188-199, 2003. Also: {\em Testing Reed-Muller codes}.
IEEE Transactions on Information Theory 51, pp. 4032-4039, 2005.

\bibitem{AS}
N. Alon and J. Spencer, {\em The Probabilistic Method, Third Edition}.
Wiley, 2008.

\bibitem{AW}
N. Alon and A. Weinstein, {\em Local correction of juntas}.
Information Processing Letters, Volume 112, Issue 6, pp. 223-226, 2012.


\bibitem{Blais-non-adaptive-juntas}
E. Blais, {\em Improved bounds for testing juntas}.
In Proc. 12th Workshop RANDOM, pp. 317-330, 2008.

\bibitem{Blais-juntas}
E. Blais, {\em Testing juntas nearly optimally}.
In Proc. 41st Annual ACM Symposium on Theory of Computing (STOC), pp. 151-158, 2009.

%
\bibitem{BO}
E. Blais and
R. O'Donnell, {\em Lower bounds for testing function isomorphism}.
In IEEE Conference on Computational Complexity, pp. 235-246, 2010.

\bibitem{BWY}
E. Blais, A. Weinstein and Y. Yoshida,
{\em Partially symmetric functions are efficiently isomorphism-testable}.
In Proc. 53rd Annual IEEE Symposium on Foundations of Computer Science (FOCS), pp. 551-560, 2012.

\bibitem{CG}
H. Chockler and D. Gutfreund, {\em A lower bound for testing juntas}.
Information Processing Letters, Volume 90, Issue 6, pp. 301-305, 2004.

\bibitem{CGM}
S. Chakraborty, D. Garc\'ia-Soriano, and A. Matsliah,
{\em Nearly tight bounds for testing function isomorphism}.
In Proc. 22nd Annual ACM-SIAM Symposium on Discrete Algorithms (SODA), pp. 1683-1702, 2011.

\bibitem{FKRSS}
E. Fischer, G. Kindler, D. Ron, S. Safra, and A. Samorodnitsky,
{\em Testing juntas}.
J. Comput. Syst. Sci., Volume 68, Issue 4, pp. 753-787, 2004.


\bibitem{Rocco}
Rocco A. Servedio, {\em Testing by implicit learning: a brief survey}.
Property Testing, pp. 197-210, 2010.

\bibitem{Y}
S. Yekhanin, {\em Locally Decodable Codes}. NOW Publishers, 2010.

\end{thebibliography}
\end{document}